\documentclass[11pt,a4paper]{article}
\usepackage[a4paper,margin=2.9cm]{geometry}
\usepackage{amsmath,amssymb,amsthm,mathtools}
\usepackage{graphicx}
\usepackage{xcolor}
\usepackage{microtype}
\usepackage{comment}
\usepackage[hidelinks]{hyperref}
\usepackage{authblk}
\hypersetup{colorlinks=true, citecolor=magenta, linkcolor=magenta, urlcolor=magenta}
\usepackage[
  style=alphabetic,
  sorting=nyt,
  minalphanames=3,
  maxalphanames=4,
  maxbibnames=99,
  doi=false,
  url=false,
  isbn=false,
  eprint=false,
  backend=biber
]{biblatex}
\DeclareFieldFormat{linked}{%
  \iffieldundef{doi}
    {\iffieldundef{url}{#1}{\href{\thefield{url}}{#1}}}
    {\href{https://doi.org/\thefield{doi}}{#1}}}
\renewbibmacro{in:}{}
\DeclareFieldFormat{pages}{#1}
\DeclareFieldFormat[article]{volume}{\mkbibbold{#1}}
\DeclareFieldFormat[article]{number}{(#1)}
\renewbibmacro*{volume+number+eid}{%
  \printfield{volume}%
  \printfield{number}}
\renewbibmacro*{journal+issuetitle}{%
  \printtext[linked]{%
    \usebibmacro{journal}%
    \setunit*{\addspace}%
    \usebibmacro{volume+number+eid}%
    \setunit{\bibpagespunct}%
    \printfield{pages}%
    \setunit*{\addspace}%
    \printtext[parens]{\printdate}}}
\renewbibmacro*{note+pages}{\printfield{note}\newunit}
\renewbibmacro*{publisher+location+date}{%
  \printtext[linked]{%
    \printlist{publisher}%
    \setunit{\addcomma\space}%
    \printdate}%
  \newunit}
\theoremstyle{plain}
\newtheorem{theorem}{Theorem}
\newtheorem{lemma}{Lemma}

\newtheorem{corollary}{Corollary}
\theoremstyle{definition}
\newtheorem{definition}{Definition}

\theoremstyle{remark}
\newtheorem{remark}{Remark}
\theoremstyle{definition}
\newtheorem*{note}{Note}
\definecolor{amethyst}{rgb}{0.75, 0, 1}

\title{\textbf{Monotonicity of the R\'enyi channel capacity\\
       under non-signaling assisted channel simulation}}
       
\author[1]{Tina Shariat\thanks{\href{mailto:tina.shariatt@gmail.com}{tina.shariatt@gmail.com}}}
\author[1]{Matt Hoogsteder-Riera\thanks{\href{mailto:hoogstedermatt@gmail.com}{hoogstedermatt@gmail.com}}}
\author[1,2]{Marco Tomamichel\thanks{\href{mailto:marco.tomamichel@nus.edu.sg}{marco.tomamichel@nus.edu.sg}}}

\affil[1]{\textit{Centre for Quantum Technologies,\protect\\[-1mm] National University of Singapore, Singapore 117543}}
\affil[2]{\textit{Department of Electrical and Computer Engineering,\protect\\[-1mm] National University of Singapore, Singapore 117583}}

\date{\vspace*{-10mm}}
\begin{document}
\maketitle
\begin{abstract}
We ask whether non-signaling correlations shared between the sender and receiver can increase the R\'enyi capacity of a classical channel. Non-signaling boxes form a strictly larger class than that of encoder and decoder pairs assisted by shared randomness, so the usual argument based on independent preprocessing and postprocessing is insufficient, and the constraints defining such boxes are linear in the box, while the R\'enyi capacity depends on the channel nonlinearly. We resolve this mismatch with a variational approach based on the Legendre transform, which recasts the capacity as an extremum over a family of affine functionals in which the channel appears linearly. Combining this method with separate arguments for the Shannon capacity and endpoint orders, we prove that no non-signaling box can raise the R\'enyi capacity at any order, from order zero through the Shannon capacity to the limiting order at infinity. Consequently, non-signaling simulation cannot increase the random-coding or sphere-packing exponent below capacity, or decrease the strong-converse exponent above capacity.
\end{abstract}
\section{Introduction}
\label{sec:intro}

The capacity of a classical channel gives the largest rate at which
reliable communication is possible. It does not say how reliable a code
of a given rate can be. Below capacity, one wants to know how fast the
error probability can go to zero with the block length; above capacity,
one wants to know how fast the success probability must go to zero.
Standard bounds for both questions are expressed through the one-parameter family of R\'enyi
capacities $R_\alpha$, where the parameter $\alpha$ acts as a continuous
dial. At $\alpha = 1$ the family gives the Shannon capacity. Orders
$\alpha \in (0,1)$ govern the error exponents below capacity: the
random-coding exponent of Gallager~\cite{gallager1965simple} uses
$\alpha \in [\tfrac12, 1)$, and the sphere-packing bound of Shannon,
Gallager and Berlekamp~\cite{shannon1967lower} uses $\alpha \in (0,1)$. Orders
$\alpha > 1$ govern the strong converse exponent of
Arimoto~\cite{arimoto1973converse}, which Dueck and K\"orner~\cite{DueckKorner1979} showed to be
tight. Therefore, to describe the reliability of a channel at every rate, and not
only its capacity, one needs the whole family of R\'enyi capacities.

The R\'enyi capacities come from the divergence measures introduced by
R\'enyi as a one-parameter generalization of relative
entropy~\cite{renyi1961}. They have a clean geometric meaning:
Csisz\'ar showed that the $\alpha$-capacity equals the radius of the
channel's set of output distributions, measured with the R\'enyi
divergence~\cite{csiszar1995}; the same statement for the Shannon
capacity goes back to Kemperman~\cite{kemperman1974}, and the structure
of the optimal center was studied in detail by
Nakibo\u{g}lu~\cite{nakiboglu2019renyi,nakiboglu2019augustin}. Our
proofs work directly with this radius formulation rather than with the
maximization over input distributions.

In practice, a channel is rarely used alone. A real protocol adds
processing before the channel at the sender and after the channel at
the receiver, and the result is again a channel. The general model for
this composition is a \emph{super-channel}: a processing box that takes
an outer input, prepares an input for the base channel, and then maps
the channel output to an outer output. The key feature of this model is
that the input side and the output side of the box need not act
independently. They may share resources: shared randomness,
entanglement, or, most generally, any correlation that causality
allows. Correlations of the last type are called \emph{non-signaling
boxes}. They were introduced by Popescu and Rohrlich to study which
physical principles separate quantum mechanics from more general
theories~\cite{popescu1994nonlocality}, and they form a convex set
strictly larger than what any physical system can produce.

Because this class of correlations is so large, it is a natural
relaxation of coding problems, and the relaxation has been very
productive. Matthews showed that optimal coding with non-signaling
assistance is a linear program, and that its value is exactly the
finite-blocklength converse of Polyanskiy, Poor and
Verd\'u~\cite{matthews2012linear,polyanskiy2010channel}: a bound first
obtained by other methods is the exact answer to an assisted coding
task. Cubitt, Leung, Matthews and Winter used non-local correlations to
study zero-error capacities and channel simulation: with assistance,
these quantities have simple formulas, while the unassisted versions
remain intractable~\cite{cubitt2011zero}. Non-signaling assistance is
therefore more than a thought experiment; it is a tool that makes
otherwise intractable operational quantities computable.

In this work, we ask whether a non-signaling box can increase the
R\'enyi capacity of a classical channel. Two points motivate this
question. First, a general non-signaling box need not decompose into separate encoder and decoder maps, so the data processing arguments that apply in the separable regime cannot be used directly. Second, an answer at $\alpha = 1$ alone would leave most of
the picture open. Monotonicity of the Shannon capacity does not exclude
a box that keeps the capacity fixed but improves the error exponent
curves below capacity, or weakens the strong converse above it. We show
that neither is possible: the entire R\'enyi capacity family is
monotone under non-signaling simulation.

The main technical obstacle is a structural mismatch. The constraints
that define a non-signaling box are linear in the box, while the
R\'enyi capacity is a nonlinear functional of the channel, so the
simulation hypothesis cannot be inserted directly into the target
quantity. Our main tool is a Legendre transform: using Young's
inequality, we rewrite the capacity as an extremum over a family of
affine functionals, and in this formulation, the channel components appear in linear form. The sign of the H\"older conjugate of $\alpha$ then decides whether the
extremum is a supremum attained by an explicit function or an infimum
bounded uniformly, and this separates the regimes of the proof.

Beyond proving the theorem, we explain what it means for coding.
Using Csisz\'ar's identity between the R\'enyi capacities and
Gallager's function~\cite{csiszar1995}, the monotonicity of the family
becomes a statement about exponents: the simulated channel never has
a larger random-coding~\cite{gallager1965simple} or
sphere-packing~\cite{shannon1967lower} exponent than the base
channel, and above capacity, its success probability never decays more
slowly (Corollary~\ref{cor:curves}). This is true for every rate independently, so a given box cannot improve the exponent at one rate by accepting a
worse one at another; and wherever the reliability function is known
--- above the critical rate --- the guarantee applies to the
reliability function itself, not only to its bounds (see Section~\ref{sec:disc-open}). The second
consequence is about assisted communication directly. We view the
encoder, the $n$ channel uses, and the decoder of an assisted code as
a single box that simulates a channel from the message to the decoded
message. The theorem then gives an Arimoto-type
bound~\cite{arimoto1973converse} that holds for every non-signaling
assisted code at every block length: above capacity, the success
probability decays exponentially, with the same exponent that already
governs unassisted codes~\cite{DueckKorner1979}
(Corollary~\ref{prop:sc}; see also Corollary~\ref{cor:alphainf-code}).
Non-signaling assistance, including the physically relevant shared randomness and entanglement, cannot weaken the strong converse exponent
of a discrete memoryless channel.

The remainder of this paper is structured as follows.
Section~\ref{sec:prelim} fixes the notation and reviews the definitions
of R\'enyi divergence, the information radius, the super-channel
framework, and the non-signaling conditions.
Section~\ref{sec:monotonicity} proves the monotonicity results in the
five regimes of $\alpha$ in Subsections~\ref{sec:alphaeq1}
through~\ref{sec:disc-zero}, and Subsection~\ref{sec7main} combines them
into the main theorem.  Section~\ref{sec:applications} develops consequences of the monotonicity for channel simulation and coding. Finally, Section~\ref{sec:discussion} contrasts them with the gains that assisted coding achieves and closes with some open questions.

\begin{note}
    After writing this manuscript we became aware of a more direct proof based on the data processing inequality of \cite[Lemma~24]{Oufkir2026Strong}. The technical details and framing of our manuscript are different enough to constitute a different work.
\end{note}
\section{Preliminaries}
\label{sec:prelim}

\subsection{Information measures}
\label{subsec:info-measures}

We use the standard notation for Shannon entropy $H(X)$, conditional entropy $H(X\mid Y)$, and mutual information $I(X;Y)$. The conditional mutual information is defined as:
\begin{equation}
  I(X;Y\mid Z) = H(X\mid Z) - H(X\mid Y,Z).
  \label{eq:cmi}
\end{equation}
We employ the chain rule in its two standard forms,
\begin{align}
  I(X;Y,Z) &= I(X;Y) + I(X;Z\mid Y) \label{eq:chain-rule-first} \\
             &= I(X;Z) + I(X;Y\mid Z), \label{eq:chain-rule-second}
\end{align}
along with the non-negativity of conditional mutual information,
\begin{equation}
  I(X;Y\mid Z) \ge 0,
  \label{eq:cmi-nonnegative}
\end{equation}
where equality holds if and only if $X$ and $Y$ are conditionally independent given $Z$ \cite{cover2006elements}.

\subsection{The Shannon capacity}
\label{subsec:shannon-capacity}

Let $X$ and $Y$ be random variables taking values in the input alphabet $\mathcal{X}$ and the output alphabet $\mathcal{Y}$, respectively. The Shannon capacity of a discrete memoryless channel $W(y\mid x)$ is defined as:
\begin{equation}
  C(W) \coloneqq \max_{P_X \in \mathcal{P}(\mathcal{X})} I(X;Y),
  \label{eq:shannon-capacity}
\end{equation}
where the joint distribution of $(X,Y)$ is given by $P_X(x) W(y\mid x)$ \cite{shannon1948}, and $\mathcal{P}(\mathcal{X})$ denotes the simplex of probability distributions over $\mathcal{X}$. The maximum in \eqref{eq:shannon-capacity} is attained because the mapping $P_X \mapsto I(X;Y)$ is concave and continuous on the compact set $\mathcal{P}(\mathcal{X})$.

\subsection{Super-channels and non-signaling conditions}
\label{subsec:super-channel}

A non-signaling assisted simulation of a channel $W'$ from a base channel $W$ is mediated by a processing box $\Pi(x,y'\mid x',y)$ wrapped around $W$, as illustrated in Figure~\ref{fig:super-channel}. The box generates the input of the base channel $x$ and the simulator output $y'$ based jointly on $x'$ and $y$. To prevent any information transfer bypassing $W$, the box must satisfy the non-signaling conditions of Definition~\ref{def:ns-simulation}: the choice of $x$ cannot anticipate $y$, and the generation of $y'$ cannot directly access $x'$.

\begin{definition}[Non-signaling simulation]
\label{def:ns-simulation}
A channel $W'$ is said to be \emph{non-signaling simulable} from a channel $W$ if there exists a joint distribution box $\Pi$ such that for all inputs and outputs:
\begin{align}
  W'(y'\mid x')
    &= \sum_{x,y} \Pi(x,y'\mid x',y)\,W(y\mid x),
    \label{eq:ns-simulation} \\
  E(x\mid x')
    &\coloneqq \sum_{y'} \Pi(x,y'\mid x',y)
      \quad \text{is independent of } y,
    \label{eq:ns1} \\
  F(y'\mid y)
    &\coloneqq \sum_{x} \Pi(x,y'\mid x',y)
      \quad \text{is independent of } x' .
    \label{eq:ns2}
\end{align}
We refer to \eqref{eq:ns1} as the \emph{no-backward signaling} condition and to \eqref{eq:ns2} as the \emph{no-forward signaling} condition.
\end{definition}

\begin{figure}[htbp]
  \centering
  \includegraphics[scale=1]{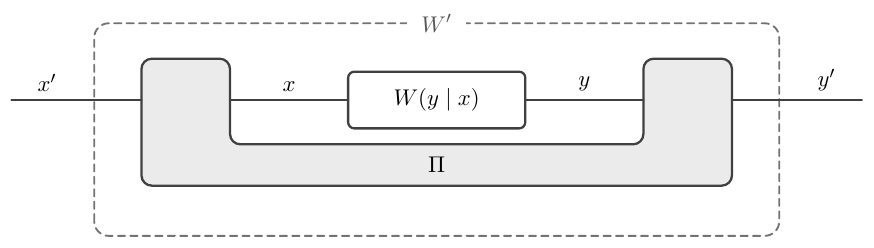}
  \caption{Non-signaling simulation of $W'$ from $W$ (Definition~\ref{def:ns-simulation}).}
  \label{fig:super-channel}
\end{figure}

Summing \eqref{eq:ns1} over $x$ and \eqref{eq:ns2} over $y'$ confirms that $E(\cdot\mid x')$ and $F(\cdot\mid y)$ form valid probability distributions on $\mathcal{X}$ and $\mathcal{Y}'$, respectively. Furthermore, the non-signaling constraints guarantee that $W'$ itself constitutes a valid conditional probability distribution (channel), as verified by the following derivation:
\begin{align}
  \sum_{y'} W'(y'\mid x')
    &= \sum_{y'} \sum_{x,y} \Pi(x,y'\mid x',y)\,W(y\mid x) \label{eq:norm-1} \\
    &= \sum_{x,y} W(y\mid x) \sum_{y'} \Pi(x,y'\mid x',y) \label{eq:norm-2} \\
    &= \sum_{x,y} W(y\mid x)\,E(x\mid x') \label{eq:norm-3} \\
    &= \sum_{x} E(x\mid x') \sum_{y} W(y\mid x) \label{eq:norm-4} \\
    &= \sum_{x} E(x\mid x') \label{eq:norm-5} \\
    &= 1. \label{eq:norm-6}
\end{align}

\subsection{R\'enyi divergence and the information radius}
\label{subsec:renyi}
Let $P$ and $Q$ be probability distributions over a finite alphabet $\mathcal{X}$. For every order $\alpha \in (0,1) \cup (1,\infty)$, the R\'enyi divergence of order $\alpha$ is defined as:
\begin{equation}
  D_\alpha(P \,\|\, Q) \coloneqq \frac{1}{\alpha-1} \log \sum_{x} P(x)^{\alpha} Q(x)^{1-\alpha}.
  \label{eq:renyi-divergence}
\end{equation}
For $0<\alpha<1$, terms with $P(x)Q(x)=0$ vanish, and $D_\alpha(P\|Q)=+\infty$ if the sum is zero. For $\alpha>1$, a term with $P(x)>0=Q(x)$ equals $+\infty$, while terms with $P(x)=0$ vanish~\cite{renyi1961}. The orders $\alpha=0$, $\alpha = 1$ and $\alpha = \infty$ are defined by their corresponding limits, and in particular, $D_1(P \,\|\, Q)$ recovers the relative entropy \cite{vanerven2014}. The quantity we track throughout is the $\alpha$-information radius of a channel $W(y\mid x)$, defined as:
\begin{equation}
    R_\alpha(W) \coloneqq \min_{Q \in \mathcal{P}(\mathcal{Y})}\max_{x \in \mathcal{X}} D_\alpha\big(W(\cdot\mid x) \,\big\|\, Q\big),
  \label{eq:info-radius}
\end{equation}
where the minimizing distribution $Q$ ranges over the output simplex and the maximum is taken over the input alphabet. The radius \eqref{eq:info-radius} coincides with the R\'enyi channel capacity of order $\alpha$, and at $\alpha = 1$ it reduces to the Shannon capacity \eqref{eq:shannon-capacity}, so that $R_1(W) = C(W)$ \cite{kemperman1974,csiszar1995}. Consequently, the monotonicity properties established in Sections~\ref{sec:alphalt1}--\ref{sec:disc-zero} directly govern the behavior of the R\'enyi capacity via the minimax representation \eqref{eq:info-radius}.

\section{Monotonicity of the R\'enyi capacity}
\label{sec:monotonicity}
The proof of monotonicity splits\textit{} into five distinct regimes according to the order $\alpha$. On the open intervals $0 < \alpha < 1$ and $\alpha > 1$, the argument is variational, where the sign change of $\alpha-1$ reverses the inequalities accordingly. The critical case $\alpha = 1$ is fundamentally different: the R\'enyi divergence reduces to the relative entropy, the exponential weights vanish, and the monotonicity follows directly from classical information inequalities via the chain rule. At $\alpha = \infty$, the limit of the divergence admits a closed-form expression for the information radius, reducing the claim to a combinatorial inequality. Finally, $\alpha=0$ is resolved as a limit case of $0,\alpha,1$. These five regimes are established in the subsections below, paving the way for the main theorem consolidated in Subsection~\ref{sec7main}.

\subsection{Monotonicity for \texorpdfstring{$\alpha=1$}{alpha=1}}
\label{sec:alphaeq1}

\begin{lemma}
\label{lem:alphaeq1}
If the simulator channel $W'$ is non-signaling simulable from the base channel $W$, then their respective Shannon capacities satisfy the monotonicity relation:
\begin{equation}
  C(W') \le C(W).
  \label{eq:lemma-alphaeq1}
\end{equation}
\end{lemma}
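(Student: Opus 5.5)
Fix an input distribution $P_{X'}$ on $\mathcal{X}'$ that attains $C(W')$. I would build a joint distribution of $(X',X,Y,Y')$ from the box, namely
\begin{equation*}
  P(x',x,y,y') = P_{X'}(x')\,\Pi(x,y'\mid x',y)\,W(y\mid x).
\end{equation*}
This is a genuine probability distribution: summing over $y'$ gives $P_{X'}(x')E(x\mid x')W(y\mid x)$ by \eqref{eq:ns1}, and this sums to one. By \eqref{eq:ns-simulation}, the $(X',Y')$ marginal is $P_{X'}(x')W'(y'\mid x')$. Hence $I(X';Y')$ computed in this joint law equals the mutual information of $W'$ at $P_{X'}$, which is $C(W')$. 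The $(X,Y)$ marginal is $P_X(x)W(y\mid x)$ with $P_X(x)=\sum_{x'}P_{X'}(x')E(x\mid x')$, so $I(X;Y)\le C(W)$. The goal is therefore the chain
\begin{equation*}
  I(X';Y') \le I(X';Y) \le I(X;Y),
\end{equation*}
which I will get from two Markov-type conditional independences and the chain rules \eqref{eq:chain-rule-first}, \eqref{eq:chain-rule-second} together with \eqref{eq:cmi-nonnegative}.

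The second inequality is standard data processing along $X'\to X\to Y$. From the computation above, the law of $(X',X,Y)$ is $P_{X'}(x')E(x\mid x')W(y\mid x)$, so $I(X';Y\mid X)=0$. Expanding $I(X';X,Y)$ both ways with \eqref{eq:chain-rule-first} and \eqref{eq:chain-rule-second} gives
\begin{equation*}
  I(X';Y)\le I(X';X,Y)=I(X';X)\le I(X;Y)+\ldots,
\end{equation*}
and the cleaner route is the symmetric version $I(X;Y)=I(X,X';Y)\ge I(X';Y)$.

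The first inequality is where no-forward signaling \eqref{eq:ns2} enters, and I expect it to be the main obstacle. The box is not a Markov chain $X'\to Y\to Y'$ in general. However, the $(X',Y,Y')$ marginal is
\begin{equation*}
  \sum_x P_{X'}(x')\,\Pi(x,y'\mid x',y)\,W(y\mid x),
\end{equation*}
and the $W(y\mid x)$ factor prevents a direct application of \eqref{eq:ns2}. So I would instead exploit the two conditions together. Write $\Pi(x,y'\mid x',y)=E(x\mid x')\,\Pi(y'\mid x,x',y)$, which is valid by \eqref{eq:ns1}. Then the law of $(X',X,Y,Y')$ is $P_{X'}E\,W\,\Pi(y'\mid x,x',y)$, and I would try to show $I(X';Y'\mid Y)\le$ something absorbed by the second step. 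Concretely, I would use $I(X';Y')\le I(X';Y,Y')=I(X';Y)+I(X';Y'\mid Y)$ and aim to bound $I(X';Y'\mid Y)$.

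If conditional independence given $Y$ fails, the fallback is to mimic the minimax form \eqref{eq:info-radius} at $\alpha=1$. Take the optimal output $Q$ for $W$ and define $Q'(y')=\sum_y F(y'\mid y)Q(y)$, which is well defined because $F$ does not depend on $x'$ by \eqref{eq:ns2}. It then remains to show $D(W'(\cdot\mid x')\|Q')\le\max_x D(W(\cdot\mid x)\|Q)$. This would follow from joint convexity/data processing applied to the pair of joint distributions
\begin{equation*}
  \sum_x E(x\mid x')\,\Pi(y'\mid x,x',y)\,W(y\mid x)
  \quad\text{versus}\quad
  \sum_x E(x\mid x')\,\Pi(y'\mid x,x',y)\,Q(y).
\end{equation*}
The $y'$-marginal of the second distribution equals $Q'$ by \eqref{eq:ns2}, while the first marginalizes to $W'(\cdot\mid x')$. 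Convexity in the mixture over $x$ then gives the bound $\max_x D(W(\cdot\mid x)\|Q)=C(W)$. I would present this radius argument as the main proof, since it uses both non-signaling conditions transparently.
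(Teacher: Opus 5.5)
Your radius argument, which you say you would present as the main proof, is correct, and it takes a genuinely different route from the paper.

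To make it airtight, state the key step explicitly. For fixed $x'$, the two laws $E(x\mid x')W(y\mid x)\Pi(y'\mid x,x',y)$ and $E(x\mid x')Q(y)\Pi(y'\mid x,x',y)$ on $(x,y,y')$ share the same kernel to $y'$; the factorization behind this is where \eqref{eq:ns1} is used. Hence their relative entropy equals $\sum_x E(x\mid x')D(W(\cdot\mid x)\|Q)$. Your phrase "convexity in the mixture over $x$" leaves this equal-kernel step implicit. Marginalizing to $y'$ and applying data processing produces $W'(\cdot\mid x')$ and, by \eqref{eq:ns2}, the $x'$-independent $Q'$. This gives $C(W')=R_1(W')\le\max_{x'}D(W'(\cdot\mid x')\|Q')\le\max_x D(W(\cdot\mid x)\|Q)=C(W)$.

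The paper never uses the radius formula at $\alpha=1$. It fixes $P_{X'}$, writes $I(X;Y)-I(X';Y')=\mathbb{E}[\log A]$ with $A=W(y\mid x)p(y')/(p(y)W'(y'\mid x'))$, and shows $\mathbb{E}[\log A]\ge 1-\mathbb{E}[1/A]\ge 0$, where $\mathbb{E}[1/A]\le1$ comes from \eqref{eq:ns2}. That proof is self-contained, using only the definition of $C$ and $\log t\le t-1$. It yields the per-input inequality $I(X';Y')\le I(X;Y)$ for the induced $P_X$, but it is tailored to $\alpha=1$.

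Your proof instead relies on the minimax identity $R_1=C$ recalled after \eqref{eq:info-radius}. In return it has exactly the skeleton of the paper's proofs for $\alpha\neq1$: push the center through $F$, average with $E(\cdot\mid x')$, and bound by the maximum. The only change is that the Young/Jensen step is replaced by data processing. Since $D_\alpha$ also satisfies data processing, and the divergence of the joint laws is a quasi-arithmetic mean of the $D_\alpha(W(\cdot\mid x)\|Q)$ (hence at most their maximum), the same argument covers every $\alpha\in[0,\infty]$ at once. This is essentially the direct proof mentioned in the paper's Note. Taking $Q=p_Y$ instead of the optimal center and averaging over $P_{X'}$ even recovers the paper's per-input statement, because $\sum_{x'}P_{X'}(x')D(W'(\cdot\mid x')\|Q')=I(X';Y')+D(p_{Y'}\|Q')$.

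Your first route should be discarded rather than repaired as written, because $I(X';Y')\le I(X';Y)$ is false for non-signaling boxes. Take shared randomness implementing a one-time pad over a noiseless bit: $X=X'\oplus K$, $Y=X$, $Y'=Y\oplus K$, with $K$ a uniform shared bit and $X'$ uniform. Then $I(X';Y)=0<\log 2=I(X';Y')$. The excess $I(X';Y'\mid Y)$ has to be paid for by the slack $I(X;Y\mid X')$ of the data-processing step. That is exactly what the paper's identity \eqref{eq:a1-identity} and the subsequent $\log t\le t-1$ estimate accomplish. Also, the intermediate line ending in $I(X;Y)+\ldots$ is not meaningful; only $I(X;Y)=I(X,X';Y)\ge I(X';Y)$ is correct there.
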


\begin{proof}
Fix an arbitrary input distribution $P_{X'}\in\mathcal{P}(\mathcal{X}')$. We will prove that
\begin{equation}
    I(X';Y') \le I(X;Y),
    \label{eq:a1-mi-drop}
\end{equation}
where $X$ carries the input distribution that $P_{X'}$ induces on $W$ through the box. Throughout, logarithms are natural, and every sum over $(x',x,y,y')$ runs over the tuples with $p(x',x,y,y')>0$.

The random variables $X',X,Y,Y'$ have the joint distribution
\begin{equation}
    p(x',x,y,y') = P_{X'}(x')\,\Pi(x,y'\mid x',y)\,W(y\mid x).
    \label{eq:a1-joint}
\end{equation}
Summing over $y'$ and using the no-backward-signaling condition \eqref{eq:ns1},
\begin{equation}
    p(x',x,y) = P_{X'}(x')\,E(x\mid x')\,W(y\mid x),
    \label{eq:a1-joint3}
\end{equation}
so that
\begin{equation}
    p(y\mid x,x') = W(y\mid x),
    \qquad
    I(X';Y\mid X) = 0.
    \label{eq:a1-markov}
\end{equation}
Summing \eqref{eq:a1-joint} over $x$ and $y$ and using \eqref{eq:ns-simulation},
\begin{equation}
    p(y'\mid x') = W'(y'\mid x').
    \label{eq:a1-outer}
\end{equation}

By the chain rule,
\begin{align}
    I(X';Y) + I(X';Y'\mid Y) &= I(X';Y') + I(X';Y\mid Y'), \label{eq:a1-chain1}\\
    I(X';Y) + I(X;Y\mid X') &= I(X;Y) + I(X';Y\mid X). \label{eq:a1-chain2}
\end{align}
Subtracting \eqref{eq:a1-chain2} from \eqref{eq:a1-chain1} and using \eqref{eq:a1-markov},
\begin{equation}
    I(X;Y) - I(X';Y') = I(X;Y\mid X') + I(X';Y\mid Y') - I(X';Y'\mid Y).
    \label{eq:a1-identity}
\end{equation}
Writing $I(X';Y\mid Y') = H(X'\mid Y') - H(X'\mid Y,Y')$ and $I(X';Y'\mid Y) = H(X'\mid Y) - H(X'\mid Y,Y')$, the term $H(X'\mid Y,Y')$ cancels, and we obtain
\begin{equation}
    I(X;Y) - I(X';Y') = I(X;Y\mid X') + H(X'\mid Y') - H(X'\mid Y).
    \label{eq:a1-identity-H}
\end{equation}
Hence \eqref{eq:a1-mi-drop} is equivalent to
\begin{equation}
    I(X;Y\mid X') + H(X'\mid Y') - H(X'\mid Y) \ge 0.
    \label{eq:a1-entropy}
\end{equation}

We now write every term in \eqref{eq:a1-entropy} with the joint distribution \eqref{eq:a1-joint}:
\begin{align}
    I(X;Y\mid X') &= \sum p(x',x,y,y')\,\log\frac{p(y\mid x,x')}{p(y\mid x')}, \label{eq:a1-I}\\
    H(X'\mid Y') &= -\sum p(x',x,y,y')\,\log p(x'\mid y'), \label{eq:a1-H1}\\
    H(X'\mid Y) &= -\sum p(x',x,y,y')\,\log p(x'\mid y). \label{eq:a1-H2}
\end{align}
Therefore, the left-hand side of \eqref{eq:a1-entropy} equals
\begin{equation}
    \sum p(x',x,y,y')\,\log A,
    \label{eq:a1-sum-logA}
\end{equation}
where
\begin{equation}
    A \coloneqq \frac{p(y\mid x,x')}{p(y\mid x')}\cdot\frac{p(x'\mid y)}{p(x'\mid y')}.
    \label{eq:a1-A-def}
\end{equation}
By Bayes' rule and \eqref{eq:a1-outer},
\begin{align}
    p(x'\mid y) &= \frac{P_{X'}(x')\,p(y\mid x')}{p(y)}, \label{eq:a1-bayes1}\\
    p(x'\mid y') &= \frac{P_{X'}(x')\,W'(y'\mid x')}{p(y')}. \label{eq:a1-bayes2}
\end{align}
Substituting \eqref{eq:a1-bayes1}, \eqref{eq:a1-bayes2}, and \eqref{eq:a1-markov} into \eqref{eq:a1-A-def}, the factors $P_{X'}(x')$ and $p(y\mid x')$ cancel, and $A$ simplifies to
\begin{equation}
    A = \frac{W(y\mid x)\,p(y')}{p(y)\,W'(y'\mid x')}.
    \label{eq:a1-A}
\end{equation}

Since $\log t \le t-1$ for $t>0$ by concavity of the logarithm, taking $t=1/A$ gives $\log A \ge 1-1/A$. Multiplying by $p(x',x,y,y')$ and summing, we obtain
\begin{equation}
    \sum p(x',x,y,y')\,\log A \ge 1 - \sum p(x',x,y,y')\,\frac{1}{A},
    \label{eq:a1-reduced}
\end{equation}
and it remains to show that the last sum is at most one. By \eqref{eq:a1-joint} and \eqref{eq:a1-A}, the factor $W(y\mid x)$ cancels:
\begin{equation}
    p(x',x,y,y')\,\frac{1}{A} = P_{X'}(x')\,\Pi(x,y'\mid x',y)\,\frac{p(y)\,W'(y'\mid x')}{p(y')}.
    \label{eq:a1-cancel}
\end{equation}
 Let $\mathcal{S}\coloneqq\{(x',y') : P_{X'}(x')\,W'(y'\mid x')>0\}$. Every tuple with $p(x',x,y,y')>0$ satisfies $(x',y')\in\mathcal{S}$, and $p(y')>0$ for every $(x',y')\in\mathcal{S}$. The right-hand side of \eqref{eq:a1-cancel} is nonnegative, so extending the sum over $(x,y)$ to all pairs can only increase it. We can act on the average of $A^{-1}$ to find the result:
\begin{align}
    \sum p(x',x,y,y')\,\frac{1}{A}
    &\le \sum_{(x',y')\in\mathcal{S}}\frac{P_{X'}(x')\,W'(y'\mid x')}{p(y')}\sum_{x,y}\Pi(x,y'\mid x',y)\,p(y) \label{eq:a1-sum1}\\
    &= \sum_{(x',y')\in\mathcal{S}}\frac{P_{X'}(x')\,W'(y'\mid x')}{p(y')}\sum_{y}F(y'\mid y)\,p(y) \label{eq:a1-sum2}\\
    &= \sum_{y':\,p(y')>0}\frac{\sum_{y}F(y'\mid y)\,p(y)}{p(y')}\underbrace{\sum_{x':\,(x',y')\in\mathcal{S}}P_{X'}(x')\,W'(y'\mid x')}_{=\,p(y')} \label{eq:a1-sum3}\\
    &= \sum_{y':\,p(y')>0}\;\sum_{y}F(y'\mid y)\,p(y) \label{eq:a1-sum4}\\
    &\le \sum_{y'}\sum_{y}F(y'\mid y)\,p(y) = \sum_{y}p(y)\sum_{y'}F(y'\mid y) = 1. \label{eq:a1-sum5}
\end{align}
Here, \eqref{eq:a1-sum2} follows from the no-forward-signaling condition \eqref{eq:ns2}; \eqref{eq:a1-sum3} regroups the sum by $y'$, which is possible because $\sum_{y}F(y'\mid y)\,p(y)$ does not depend on $x'$; and the inequality in \eqref{eq:a1-sum5} holds because it only adds the nonnegative terms with $p(y')=0$. Combining \eqref{eq:a1-sum5} with \eqref{eq:a1-reduced}, the left-hand side of \eqref{eq:a1-entropy} is at least $1-1=0$, which proves \eqref{eq:a1-mi-drop}.

Finally, taking $P_{X'}$ capacity-achieving for $W'$,
\begin{equation}
    C(W') = I(X';Y') \le I(X;Y) \le C(W),
    \label{eq:a1-final}
\end{equation}
where the last inequality holds because the induced input distribution of $X$ need not be optimal for $W$.
\end{proof}

\medskip

\subsection{Monotonicity for \texorpdfstring{$0<\alpha<1$}{0<alpha<1}}
\label{sec:alphalt1}

\begin{lemma}
\label{lem:alphalt1}
If the simulator channel $W'$ is non-signaling simulable from the base channel $W$, then their respective R\'enyi capacities of order $0 < \alpha < 1$ satisfy the monotonicity relation:
\begin{equation}
  R_\alpha(W') \le R_\alpha(W).
  \label{eq:lemma-alphalt1}
\end{equation}
\end{lemma}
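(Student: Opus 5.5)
The plan is to work with the radius formulation \eqref{eq:info-radius} and transport an optimal center for $W$ through the output side of the box. Let $Q^\star\in\mathcal{P}(\mathcal{Y})$ attain the minimum in \eqref{eq:info-radius} for $W$, and set $r\coloneqq R_\alpha(W)$. We may assume $r<\infty$, since otherwise there is nothing to prove.

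Because $\alpha-1<0$, the bound $D_\alpha(W(\cdot\mid x)\|Q^\star)\le r$ for all $x$ is equivalent to
\begin{equation*}
  g(x)\coloneqq\sum_{y} W(y\mid x)^{\alpha}\,Q^\star(y)^{1-\alpha}\;\ge\; e^{(\alpha-1)r}\qquad\text{for all } x\in\mathcal{X}.
\end{equation*}
As candidate center for $W'$, I take the image of $Q^\star$ under the marginal output map \eqref{eq:ns2}:
\begin{equation*}
  Q'(y')\coloneqq\sum_{y}F(y'\mid y)\,Q^\star(y).
\end{equation*}
This is a probability distribution because $F(\cdot\mid y)$ is one. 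It then suffices to show $\sum_{y'}W'(y'\mid x')^\alpha Q'(y')^{1-\alpha}\ge e^{(\alpha-1)r}$ for every $x'$. Indeed, this gives $\max_{x'}D_\alpha(W'(\cdot\mid x')\|Q')\le r$, and hence $R_\alpha(W')\le r$.

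The key step is to linearize the nonlinear term in the box. The map $(a,b)\mapsto a^\alpha b^{1-\alpha}$ on $[0,\infty)^2$ is concave and positively $1$-homogeneous for $0<\alpha<1$. It is therefore superadditive:
\begin{equation*}
  \Big(\sum_i a_i\Big)^{\alpha}\Big(\sum_i b_i\Big)^{1-\alpha}\ge\sum_i a_i^\alpha b_i^{1-\alpha}.
\end{equation*}
This is the Legendre/Young viewpoint: $a^\alpha b^{1-\alpha}$ is an infimum of affine functionals $\alpha s a+(1-\alpha)s^{-\alpha/(1-\alpha)}b$, and an infimum of linear functionals is superadditive. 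Fix $x',y'$ and index by $i=(x,y)$, with
\begin{equation*}
  a_i=\Pi(x,y'\mid x',y)\,W(y\mid x),\qquad b_i=\Pi(x,y'\mid x',y)\,Q^\star(y).
\end{equation*}
By \eqref{eq:ns-simulation}, $\sum_i a_i=W'(y'\mid x')$. By \eqref{eq:ns2}, $\sum_i b_i=\sum_y F(y'\mid y)Q^\star(y)=Q'(y')$, and this is exactly where no-forward signaling is needed: the sum must not depend on $x'$. Superadditivity therefore yields
\begin{equation*}
  W'(y'\mid x')^\alpha Q'(y')^{1-\alpha}\ge\sum_{x,y}\Pi(x,y'\mid x',y)\,W(y\mid x)^\alpha Q^\star(y)^{1-\alpha}.
\end{equation*}
Summing over $y'$ and using no-backward signaling \eqref{eq:ns1}, the right side becomes $\sum_x E(x\mid x')\,g(x)$. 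This is a convex combination of the values $g(x)$, so it is at least $\min_x g(x)\ge e^{(\alpha-1)r}$, as required.

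The main obstacle I expect is the technical bookkeeping rather than the core inequality. First, the boundary conventions for $D_\alpha$ stated after \eqref{eq:renyi-divergence} must be handled: zero terms, and the value $+\infty$ when the sum vanishes. The superadditivity inequality must be checked to hold with zeros allowed, which follows by continuity of $a^\alpha b^{1-\alpha}$ on the closed orthant. Second, the minimum in \eqref{eq:info-radius} must actually be attained. If attainment is in doubt, I would run the argument with a $Q$ that is $\varepsilon$-optimal for $W$ and let $\varepsilon\to0$. The direction of every inequality must also be tracked carefully through the sign of $\alpha-1$, since the same scheme for $\alpha>1$ reverses the concavity and needs the supremum form of the Legendre transform instead.
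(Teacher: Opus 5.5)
Your proof is correct and has the same skeleton as the paper's. You use the same candidate center: your $Q'$, the paper's $\widehat{Q}$, obtained by pushing an optimal center through $F$. You reach the same intermediate inequality $\sum_{y'}W'(y'\mid x')^{\alpha}Q'(y')^{1-\alpha}\ge\sum_x E(x\mid x')\,g(x)$, which is the paper's \eqref{eq:infimum-taken-lt1}. And you close the same way, noting that an $E(\cdot\mid x')$-average dominates the minimum.

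You differ in how you prove that intermediate inequality. The paper argues dually. It writes each $\tfrac{1}{\alpha}g(x)$ as an infimum of affine functionals via the reversed Young inequality with $\beta=\alpha/(\alpha-1)<0$, and inserts the averaged test function $\widehat{f}_x$. It then controls the nonlinear term by Jensen for $t\mapsto t^{\beta}$, and takes the infimum over $f$ only at the end. You argue primally and pointwise in $y'$, using H\"older's inequality as superadditivity of the jointly concave, $1$-homogeneous map $(a,b)\mapsto a^{\alpha}b^{1-\alpha}$. This needs no test functions and no division by $E(x\mid x')$ (hence no footnote about $E(x\mid x')=0$). It also sidesteps the boundary behavior of the Legendre infimum, since zeros are handled by continuity.

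Conceptually, your step is the data-processing inequality for $D_\alpha$. It applies to the two probability distributions $E(x\mid x')W(y\mid x)$ and $E(x\mid x')Q^\star(y)$ on $\mathcal{X}\times\mathcal{Y}$. The kernel $\Pi(x,y'\mid x',y)/E(x\mid x')$ maps them to $W'(\cdot\mid x')$ and $Q'$. This is in the spirit of the more direct route mentioned in the paper's Note.

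The paper's route buys an explicit linearization in the channel, which is its organizing theme and parallels the $\alpha=\infty$ argument. Its Jensen step is dispensable, though. Applying Young's inequality directly on triples $(x,y,y')$ with test function $f(y')$ gives \eqref{eq:collapse-lt1} at once, and that is exactly the dual form of your superadditivity.

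Finally, contrary to your closing remark, your method needs no supremum-form Legendre transform for $\alpha>1$. There $(a,b)\mapsto a^{\alpha}b^{1-\alpha}$ is the perspective of the convex function $t\mapsto t^{\alpha}$, so it is jointly convex and subadditive. The same two lines then yield Lemma~\ref{lem:alphagt1}; an optimal center dominates every $W(\cdot\mid x)$, so no infinite terms arise.
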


\begin{proof}
Throughout the proof, $\alpha \in (0,1)$ is held fixed. To eliminate logarithms and simplify the analytical expressions, we introduce the exponentiated surrogate as follows:
\begin{equation}
  S_\alpha(W) \coloneqq e^{(\alpha-1) R_\alpha(W)} .
  \label{eq:surrogate-def-lt1}
\end{equation}
Since $\alpha < 1$, $e^{(\alpha-1)(\cdot)}$ is strictly decreasing, so comparing $R_\alpha$ is equivalent to comparing $S_\alpha$ in reverse order:
\begin{equation}
  R_\alpha(W') \le R_\alpha(W)
  \iff
  S_\alpha(W') \ge S_\alpha(W).
  \label{eq:reduction-lt1}
\end{equation}
By \eqref{eq:reduction-lt1}, it suffices to show that $S_\alpha(W') \ge S_\alpha(W)$. 

Using the definition of $S_\alpha$ from \eqref{eq:surrogate-def-lt1}, we can write:
\begin{equation}
  S_\alpha(W) = \max_{Q \in \mathcal{P}(\mathcal{Y})} \min_{x \in \mathcal{X}} \sum_{y} W(y\mid x)^{\alpha} Q(y)^{1-\alpha} .
  \label{eq:surrogate-lt1}
\end{equation}
To facilitate the analysis of the inner minimization in \eqref{eq:surrogate-lt1}, we define the auxiliary function $\Psi_W$ as follows:
\begin{equation}
  \Psi_W(Q) \coloneqq \min_{x\in \mathcal{X}} \sum_{y} W(y\mid x)^{\alpha} Q(y)^{1-\alpha} ,
  \label{eq:augustin-functional-lt1}
\end{equation}
so that $S_\alpha(W) = \max_{Q} \Psi_W(Q)$. We let $\widetilde{Q}$ denote a maximizer of \eqref{eq:augustin-functional-lt1} over the compact simplex $\mathcal{P}(\mathcal{Y})$, referred to as a R\'enyi center of the base channel $W$~\cite{nakiboglu2019renyi}. The existence of $\widetilde{Q}$ follows directly from the extreme value theorem \cite{rudin1976principles}, guaranteed by the continuity of $\Psi_W$ on a compact domain: for $0 < \alpha < 1$, the exponent $1-\alpha$ is positive, so the mapping $t \mapsto t^{1-\alpha}$ is continuous on $[0,\infty)$, each inner summation in \eqref{eq:augustin-functional-lt1} is continuous in $Q$, and the pointwise minimum of finitely many continuous functions is continuous. This ensures the attainment of the maximum value $\Psi_W(\widetilde{Q}) = S_\alpha(W)$. Moreover, since $t \mapsto t^{1-\alpha}$ is also concave, each inner summation is concave in $Q$ and so is their pointwise minimum; hence $\Psi_W$ is concave.

Passing $\widetilde{Q}$ through the marginal $F$ of \eqref{eq:ns2} gives:
\begin{equation}
  \widehat{Q}(y') \coloneqq \sum_{y} F(y'\mid y)\, \widetilde{Q}(y) .
  \label{eq:pushforward-lt1}
\end{equation}
As a mixture of the distributions $F(\cdot\mid y)$, $\widehat{Q}$ is non-negative, and summing over $y'$ yields:
\begin{equation}
  \sum_{y'} \widehat{Q}(y') = \sum_{y'} \sum_{y} F(y'\mid y)\, \widetilde{Q}(y) = \sum_{y} \widetilde{Q}(y) \sum_{y'} F(y'\mid y) = \sum_{y} \widetilde{Q}(y) = 1.
  \label{eq:qhat-sum-lt1}
\end{equation}

To bridge the gap between $S_\alpha(W')$ and $S_\alpha(W)$, we introduce the auxiliary quantity $T$. Since $\widehat{Q}$ is a feasible candidate in the maximization defining $S_\alpha(W')$, we obtain the feasibility bound:
\begin{equation}
  S_\alpha(W') \ge T,
  \qquad
  T \coloneqq \min_{x' \in \mathcal{X}'} \sum_{y'} W'(y'\mid x')^{\alpha}\, \widehat{Q}(y')^{1-\alpha} .
  \label{eq:feasibility-bound-lt1}
\end{equation}
It remains to show $T \ge S_\alpha(W)$. Since $T$ is a minimum over $x'$, we fix an arbitrary $x'^{*} \in \mathcal{X}'$ and bound the corresponding sum from below.

To evaluate this sum and connect it back to the base channel, we leverage the variational Legendre representation derived from Young's inequality. Letting $\beta \coloneqq \alpha/(\alpha-1)$ be the H\"older conjugate of $\alpha$ (so $\beta < 0$, $\frac{1}{\alpha} + \frac{1}{\beta} = 1$ and $-\frac{1}{\beta} > 0$), the reversed form of Young's inequality \cite{hardy1952inequalities} states that for all $p, q \ge 0$ and $t > 0$,
\begin{equation}
  p\,t - \frac{1}{\beta} q\, t^{\beta} \ \ge\ \frac{1}{\alpha} p^{\alpha} q^{1-\alpha},
  \label{eq:young-lt1}
\end{equation}
with equality when $p, q > 0$ at $t = (p/q)^{\alpha-1}$. 

Applying \eqref{eq:young-lt1} pointwise and summing implies that for any conditional distribution $ W(\cdot\mid x)$ and any center $Q$,
\begin{equation}
  \frac{1}{\alpha} \sum_{y} W(y\mid x)^{\alpha} Q(y)^{1-\alpha} = \inf_{h > 0} \left[ \sum_{y} W(y\mid x) h(y) - \frac{1}{\beta} \sum_{y} Q(y) h(y)^{\beta} \right],
  \label{eq:legendre-lt1}
\end{equation}
where, whenever $W(y\mid x)$ and $Q(y)$ are positive, the infimum over $h(y)$ is attained at $h(y) = (W(y\mid x)/Q(y))^{\alpha-1}$.

To connect $S_\alpha(W)$ with $T$ using the Legendre representation, based on an arbitrary test function $f > 0$, we define the local test function $\widehat{f}_{x}(y)$ as follows:
\begin{equation}
  \widehat{f}_{x}(y) \coloneqq \frac{1}{E(x\mid x'^{*})} \sum_{y'} \Pi(x, y' \mid x'^{*}, y) f(y'),
  \label{eq:favg-lt1}
\end{equation}
where, by the no-backward signaling condition \eqref{eq:ns1}, $\Pi(x, y' \mid x'^{*}, y) / E(x\mid x'^{*})$ forms a probability distribution in $y'$. Therefore, $\widehat{f}_{x}(y)$ represents the expected value of $f$ under this conditional distribution:
\begin{equation}
  \widehat{f}_{x}(y) = \mathbb{E}_{y'\sim \Pi(x, y'\mid x'^{*}, y)/E(x\mid x'^{*})} [f(y')],
  \label{eq:favg-expectation-lt1}
\end{equation}
which is a convex combination of the values of $f$\footnote{If $E(x\mid x'^{*}) = 0$ for some $x$, then $\Pi(x, y' \mid x'^{*}, y) = 0$ for all $y, y'$, so such $x$ contribute nothing to any of the sums below and may be omitted.}.

Using the Legendre representation for the conditional distribution $ W(\cdot\mid x)$ and the center $\widetilde{Q}$, by substituting the test function $\widehat{f}_{x}$ in place of the general test function, we obtain the inequality:
\begin{equation}
\begin{aligned}
  \frac{1}{\alpha} \sum_{y} W(y\mid x)^{\alpha} \widetilde{Q}(y)^{1-\alpha} 
  &= \inf_{h > 0} \left[ \sum_{y} W(y\mid x) h(y) - \frac{1}{\beta} \sum_{y} \widetilde{Q}(y) h(y)^{\beta} \right] \\
  &\le \left[ \sum_{y} W(y\mid x) \widehat{f}_{x}(y) - \frac{1}{\beta} \sum_{y} \widetilde{Q}(y) \widehat{f}_{x}(y)^{\beta} \right].
\end{aligned}
\label{eq:legendre-P-lt1}
\end{equation}

Multiplying both sides of the inequality \eqref{eq:legendre-P-lt1} by $E(x\mid x'^{*})$ and summing over $x$, we obtain:
\begin{equation}
\begin{split}
  \frac{1}{\alpha} \sum_{x} E(x\mid x'^{*}) \sum_{y} W(y\mid x)^{\alpha} \widetilde{Q}(y)^{1-\alpha} 
  &\le \sum_{x} E(x\mid x'^{*}) \sum_{y} W(y\mid x) \widehat{f}_{x}(y) \\
  &\quad - \frac{1}{\beta} \sum_{x} E(x\mid x'^{*}) \sum_{y} \widetilde{Q}(y) \widehat{f}_{x}(y)^{\beta}.
\end{split}
\label{eq:multiplied-lt1}
\end{equation}

On the right-hand side of \eqref{eq:multiplied-lt1}, there is a linear term and a nonlinear term, and we handle each of them separately.

For the linear term, expanding $\widehat{f}_{x}$ via \eqref{eq:favg-lt1} and then applying the simulation identity \eqref{eq:ns-simulation}, we obtain:
\begin{equation}
\begin{split}
  \sum_{x} E(x\mid x'^{*}) \sum_{y} W(y\mid x)\, \widehat{f}_{x}(y) 
  &= \sum_{y'} f(y') \sum_{x,y} W(y\mid x)  \Pi(x, y' \mid x'^{*}, y) \\
  &= \sum_{y'} f(y') W'(y'\mid x'^{*}).
\end{split}
\label{eq:linear-term-lt1}
\end{equation}

For the nonlinear term, we seek an upper bound. Since $\beta < 0$, the function $t \mapsto t^{\beta}$ is convex, and Jensen's inequality \cite{hardy1952inequalities} applied to the average $\widehat{f}_{x}(y)$ in \eqref{eq:favg-expectation-lt1} gives:
\begin{equation}
  \widehat{f}_{x}(y)^{\beta} \le \frac{1}{E(x\mid x'^{*})} \sum_{y'} \Pi(x, y' \mid x'^{*}, y) f(y')^{\beta}.
  \label{eq:jensen-ineq-lt1}
\end{equation}
Multiplying by $E(x \mid x'^{*})\widetilde{Q}(y)$, summing over $x$ and $y$, using the no-forward signaling condition \eqref{eq:ns2} to collapse the $x$ sum, and recognizing the remaining sum over $y$ as the pushforward \eqref{eq:pushforward-lt1}, we obtain:
\begin{equation}
  \sum_{x} E(x \mid x'^{*}) \sum_{y} \widetilde{Q}(y)\, \widehat{f}_{x}(y)^{\beta} \le \sum_{y'} f(y')^{\beta} \sum_{y} F(y'\mid y)\, \widetilde{Q}(y) = \sum_{y'} f(y')^{\beta}\, \widehat{Q}(y').
  \label{eq:jensen-collapsed-lt1}
\end{equation}
Substituting the identity \eqref{eq:linear-term-lt1} for the linear term and the upper bound \eqref{eq:jensen-collapsed-lt1} for the nonlinear term into \eqref{eq:multiplied-lt1}, where the latter substitution preserves the direction of the inequality since $-1/\beta > 0$, we obtain:
\begin{equation}
  \frac{1}{\alpha} \sum_{x} E(x\mid x'^{*}) \sum_{y} W(y\mid x)^{\alpha} \widetilde{Q}(y)^{1-\alpha} \le \sum_{y'} W'(y'\mid x'^{*})\, f(y') - \frac{1}{\beta} \sum_{y'} \widehat{Q}(y')\, f(y')^{\beta}.
  \label{eq:collapse-lt1}
\end{equation}
In \eqref{eq:collapse-lt1}, only the right-hand side depends on $f$, so we may optimize it over $f > 0$. To evaluate its infimum, we apply the Legendre representation \eqref{eq:legendre-lt1} to the channel $W'(\cdot\mid x'^{*})$ with center $\widehat{Q}$, which reads:
\begin{equation}
  \frac{1}{\alpha} \sum_{y'} W'(y'\mid x'^{*})^{\alpha}\, \widehat{Q}(y')^{1-\alpha} = \inf_{f > 0} \left[ \sum_{y'} W'(y'\mid x'^{*})\, f(y') - \frac{1}{\beta} \sum_{y'} \widehat{Q}(y')\, f(y')^{\beta} \right].
  \label{eq:T-legendre-lt1}
\end{equation}
Taking the infimum over $f > 0$ in \eqref{eq:collapse-lt1} and substituting \eqref{eq:T-legendre-lt1}, we multiply both sides by $\alpha > 0$ to obtain:
\begin{equation}
  \sum_{x} E(x\mid x'^{*}) \sum_{y} W(y\mid x)^{\alpha}\, \widetilde{Q}(y)^{1-\alpha} \le \sum_{y'} W'(y'\mid x'^{*})^{\alpha}\, \widehat{Q}(y')^{1-\alpha}.
  \label {eq:infimum-taken-lt1}
\end{equation}
Since $E(\cdot \mid x'^{*})$ is a probability distribution over $x$ owing to the no-backward signaling condition \eqref{eq:ns1}, an average over $x$ is bounded from below by the minimum over $x$. Consequently, we have:
\begin{equation}
  S_\alpha(W) = \min_{x} \sum_{y} W(y\mid x)^{\alpha} \widetilde{Q}(y)^{1-\alpha} \le \sum_{x} E(x\mid x'^{*}) \sum_{y} W(y\mid x)^{\alpha} \widetilde{Q}(y)^{1-\alpha},
  \label{eq:lower-bound-lt1}
\end{equation}
where the first equality holds by \eqref{eq:augustin-functional-lt1}, since $\widetilde{Q}$ attains the maximum of $\Psi_W$.

By combining \eqref{eq:infimum-taken-lt1} and \eqref{eq:lower-bound-lt1}, we obtain:
\begin{equation}
  S_\alpha(W) \le \sum_{y'} W'(y'\mid x'^{*})^{\alpha}\, \widehat{Q}(y')^{1-\alpha}.
  \label{eq:final-bound-lt1}
\end{equation}
Since only the right-hand side of \eqref{eq:final-bound-lt1} depends on $x'^{*}$, which was chosen arbitrarily from the beginning, we can take the minimum over $x'^{*}$ on the right-hand side. Consequently, we have:
\begin{equation}
  S_\alpha(W) \le \min_{x' \in \mathcal{X}'} \sum_{y'} W'(y'\mid x')^{\alpha}\, \widehat{Q}(y')^{1-\alpha} = T,
  \label{eq:renyi-capacity-bound-lt1}
\end{equation}
where the last equality holds by the definition of $T$ \eqref{eq:feasibility-bound-lt1}.

Therefore, combining \eqref{eq:renyi-capacity-bound-lt1} and \eqref{eq:feasibility-bound-lt1}, we obtain:
\begin{equation}
  S_\alpha(W) \le T \le S_\alpha(W'),
  \label{eq:monotonicity-chain-lt1}
\end{equation}
which, by \eqref{eq:reduction-lt1}, completes the proof.
\end{proof}

\subsection{Monotonicity for \texorpdfstring{$\alpha>1$}{alpha>1}}
\label{sec:alphagt1}

\begin{lemma}
\label{lem:alphagt1}
If the simulator channel $W'$ is non-signaling simulable from the base channel $W$, then their respective R\'enyi capacities of order $\alpha > 1$ satisfy the monotonicity relation:
\begin{equation}
  R_\alpha(W') \le R_\alpha(W).
  \label{eq:lemma-alphagt1}
\end{equation}
\end{lemma}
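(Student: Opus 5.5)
I will mirror the proof of Lemma~\ref{lem:alphalt1}, tracking how the sign change of $\alpha-1$ and of the H\"older conjugate flips each inequality. Fix $\alpha>1$ and define the surrogate $S_\alpha(W)\coloneqq e^{(\alpha-1)R_\alpha(W)}$. Since $e^{(\alpha-1)(\cdot)}$ is now increasing, the claim is equivalent to $S_\alpha(W')\le S_\alpha(W)$, where
\begin{equation*}
  S_\alpha(W)=\min_{Q\in\mathcal{P}(\mathcal{Y})}\max_{x}\sum_y W(y\mid x)^\alpha Q(y)^{1-\alpha}.
\end{equation*}
Let $\widetilde{Q}$ be a minimizer for $W$. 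If $R_\alpha(W)=\infty$ there is nothing to prove. Otherwise the function is lower semicontinuous on the compact simplex, with values in $[0,\infty]$, so a minimizer exists and has $\widetilde{Q}(y)>0$ wherever some $W(y\mid x)>0$. Push it forward to $\widehat{Q}(y')=\sum_y F(y'\mid y)\widetilde{Q}(y)$, which is a distribution by the argument of \eqref{eq:qhat-sum-lt1}. Feasibility gives $S_\alpha(W')\le T\coloneqq \max_{x'}\sum_{y'}W'(y'\mid x')^\alpha\widehat{Q}(y')^{1-\alpha}$, so it suffices to show, for each fixed $x'^*$, that the corresponding sum is at most $\sum_x E(x\mid x'^*)\sum_y W(y\mid x)^\alpha\widetilde{Q}(y)^{1-\alpha}\le S_\alpha(W)$. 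The last inequality holds because an average is at most the maximum.

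The key tool is the Legendre representation with $\beta=\alpha/(\alpha-1)>1$. Young's inequality now reads $p t-\frac1\beta q t^\beta\le\frac1\alpha p^\alpha q^{1-\alpha}$ for $p,q\ge0$, $t>0$, since $\frac{1}{\alpha}+\frac{1}{\beta}=1$ with both positive. Hence
\begin{equation*}
  \frac1\alpha\sum_y P(y)^\alpha Q(y)^{1-\alpha}=\sup_{h>0}\Big[\sum_y P(y)h(y)-\frac1\beta\sum_y Q(y)h(y)^\beta\Big],
\end{equation*}
with the supremum attained at $h=(P/Q)^{\alpha-1}$ when $Q>0$ on the support of $P$. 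This is the mirror image of \eqref{eq:legendre-lt1}: the extremum is now a supremum, attained by an explicit function.

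Apply the representation to $W'(\cdot\mid x'^*)$ and $\widehat{Q}$, and take an arbitrary test function $f>0$ on $\mathcal{Y}'$. Define $\widehat{f}_x$ exactly as in \eqref{eq:favg-lt1}. The linear term $\sum_{y'}W'(y'\mid x'^*)f(y')$ equals $\sum_x E(x\mid x'^*)\sum_y W(y\mid x)\widehat f_x(y)$, by the same computation as \eqref{eq:linear-term-lt1}. For the nonlinear term, $t\mapsto t^\beta$ is convex since $\beta>1$, so Jensen's inequality gives $\widehat f_x(y)^\beta\le$ the average of $f^\beta$. Collapsing with \eqref{eq:ns2} then yields $\sum_x E(x\mid x'^*)\sum_y\widetilde{Q}(y)\widehat f_x(y)^\beta\le\sum_{y'}\widehat{Q}(y')f(y')^\beta$. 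Because the nonlinear term enters with coefficient $-1/\beta<0$, this bound goes in the right direction:
\begin{equation*}
  \sum_{y'}W'f-\frac1\beta\sum_{y'}\widehat Q f^\beta\le\sum_x E(x\mid x'^*)\Big[\sum_y W(y\mid x)\widehat f_x-\frac1\beta\sum_y\widetilde Q\,\widehat f_x^{\,\beta}\Big]\le\frac1\alpha\sum_x E(x\mid x'^*)\sum_y W^\alpha\widetilde Q^{1-\alpha}.
\end{equation*}
The second inequality applies the supremum representation for each $x$ with $h=\widehat f_x$. Taking the supremum over $f>0$ on the left recovers $\frac1\alpha\sum_{y'}W'(y'\mid x'^*)^\alpha\widehat Q(y')^{1-\alpha}$, which gives the desired bound.

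The main obstacle is the handling of zeros and infinities, which is more delicate than for $\alpha<1$. Terms with $Q=0<P$ are $+\infty$, and the supremum then diverges; test functions $\widehat f_x$ and $f$ must be strictly positive, which holds automatically since $\widehat f_x$ averages a positive $f$. I would also treat $E(x\mid x'^*)=0$ as in the footnote. I would verify that the representation remains valid, with value $+\infty$, when supports are mismatched. The displayed chain then shows that finiteness of the right-hand side forces finiteness of the left, so no separate support argument for $\widehat Q$ is needed. Maximizing over $x'^*$ gives $T\le S_\alpha(W)$, hence $S_\alpha(W')\le S_\alpha(W)$.
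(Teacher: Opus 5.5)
Your proposal is correct and follows essentially the same route as the paper's proof: the increasing surrogate $S_\alpha$, the pushforward center $\widehat{Q}$ of the minimizer $\widetilde{Q}$, the Young/Legendre supremum representation with $\beta>1$, the averaged test function $\widehat{f}_x$, Jensen plus no-forward signaling for the nonlinear term, and then average $\le$ maximum followed by the maximum over $x'^*$. The only minor deviation is that the paper proves $\widehat{Q}>0$ on the support of $W'$ separately before defining $T$. You instead get it from the finiteness of the right-hand side of your chain, using the extended-real form of the Legendre representation, and this is equally valid.
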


\begin{proof}
The proof is overall very similar to the previous case, $0<\alpha<1$, but because of the sign as well as convexity/concavity differences we write them separately. 

Throughout the proof, $\alpha \in (1, \infty)$ is held fixed. To eliminate logarithms and simplify the analytical expressions, we introduce the exponentiated surrogate as follows:
\begin{equation}
  S_\alpha(W) \coloneqq e^{(\alpha-1) R_\alpha(W)} .
  \label{eq:surrogate-def}
\end{equation}
Since $\alpha > 1$, $e^{(\alpha-1)(\cdot)}$ is strictly increasing, so comparing $R_\alpha$ is equivalent to comparing $S_\alpha$:
\begin{equation}
  R_\alpha(W') \le R_\alpha(W)
  \iff
  S_\alpha(W') \le S_\alpha(W).
  \label{eq:reduction-gt1}
\end{equation}
By \eqref{eq:reduction-gt1}, it suffices to show that $S_\alpha(W') \le S_\alpha(W)$. Using the definition of $S_\alpha$ from \eqref{eq:surrogate-def}, we can write:
\begin{equation}
  S_\alpha(W) = \min_{Q \in \mathcal{P}(\mathcal{Y})} \max_{x \in \mathcal{X}} \sum_{y} W(y\mid x)^{\alpha} Q(y)^{1-\alpha} .
  \label{eq:surrogate-gt1}
\end{equation}
To facilitate the analysis of the inner maximization in \eqref{eq:surrogate-gt1}, we define the auxiliary function $\Phi_W$ as follows:
\begin{equation}
  \Phi_W(Q) \coloneqq \max_{x\in \mathcal{X}}\sum_{y} W(y\mid x)^{\alpha} Q(y)^{1-\alpha} ,
  \label{eq:augustin-functional-def}
\end{equation}
so that $S_\alpha(W) = \min_{Q} \Phi_W(Q)$. We let $\widetilde{Q}$ denote a minimizer of \eqref{eq:augustin-functional-def} over the compact simplex $\mathcal{P}(\mathcal{Y})$, referred to as a R\'enyi center of teh base channel $W$~\cite{nakiboglu2019renyi}. The existence of $\widetilde{Q}$ follows directly from the extreme value theorem \cite{rudin1976principles}, guaranteed by the continuity of $\Phi_W$ on a compact domain: for $\alpha > 1$, the exponent $1-\alpha$ is negative, so each summand $W(y\mid x)^{\alpha} Q(y)^{1-\alpha}$ is continuous in $Q$ as a map into $[0,\infty]$, diverging to $+\infty$ as $Q(y) \to 0$ whenever $W(y\mid x) > 0$, and the pointwise maximum of finitely many such maps is continuous. This ensures the attainment of the minimum value $\Phi_W(\widetilde{Q}) = S_\alpha(W)$, which is finite by the fact that $R_\alpha(W) < \infty$ for finite alphabets. Moreover, since $t\mapsto t^{1-\alpha}$ is convex, $\Phi_W$ is convex.

An essential property of this minimizer is that $\widetilde{Q}(y) > 0$ whenever $W(y\mid x) > 0$ for some $x$. To see this, suppose to the contrary that there exists some $y_0$ such that $W(y_0\mid x) > 0$ for a given $x$ while $\widetilde{Q}(y_0) = 0$. Since $\alpha > 1$, the exponent $1-\alpha$ is negative, forcing the corresponding term in \eqref{eq:augustin-functional-def} to diverge to $+\infty$. This would imply $\Phi_W(\widetilde{Q}) = +\infty$, contradicting the finiteness of $S_\alpha(W)$. Thus, the support of $\widetilde{Q}$ must cover all transitions allowed by $W$.

Passing $\widetilde{Q}$ through the marginal $F$ of \eqref{eq:ns2} gives:
\begin{equation}
  \widehat{Q}(y') \coloneqq \sum_{y} F(y'\mid y)\, \widetilde{Q}(y) .
  \label{eq:pushforward-gt1}
\end{equation}
As a mixture of the distributions $F(\cdot\mid y)$, $\widehat{Q}$ is a probability distribution on $\mathcal{Y}'$, since summing over $y'$ yields:
\begin{equation}
  \sum_{y'} \widehat{Q}(y') = \sum_{y'} \sum_{y} F(y'\mid y)\, \widetilde{Q}(y) = \sum_{y} \widetilde{Q}(y) \sum_{y'} F(y'\mid y) = \sum_{y} \widetilde{Q}(y) = 1.
  \label{eq:qhat-sum}
\end{equation}
Furthermore, $\widehat{Q}(y') > 0$ whenever $W'(y'\mid x') > 0$ for some $x'$. To verify this support property, suppose that $W'(y'\mid x') > 0$ for a given input $x'$. By the non-signaling simulation definition \eqref{eq:ns-simulation}, the conditional probability is expanded as:
\begin{equation}
  W'(y'\mid x') = \sum_{x,y} \Pi(x, y' \mid x', y)\, W(y\mid x) > 0.
  \label{eq:support-sim-expand}
\end{equation}
For this sum to be strictly positive, there must exist at least one pair $(x,y)$ in the support of the summation such that $\Pi(x, y' \mid x', y) > 0$ and $W(y\mid x) > 0$. 
Given that $W(y\mid x) > 0$, the previously established property of the Augustin mean $\widetilde{Q}$ guarantees that $\widetilde{Q}(y) > 0$. Furthermore, invoking the no-forward signaling condition \eqref{eq:ns2}, the marginal decoder satisfies:
\begin{equation}
  F(y'\mid y) = \sum_{x''} \Pi(x'', y' \mid x', y) \ge \Pi(x, y' \mid x', y) > 0.
  \label{eq:support-ns2}
\end{equation}
Substituting these strictly positive bounds into the definition of $\widehat{Q}$ in \eqref{eq:pushforward-gt1}, and lower-bounding the full summation by the specific term corresponding to $y$, we obtain:
\begin{equation}
  \widehat{Q}(y') = \sum_{y''} F(y'\mid y'')\, \widetilde{Q}(y'') \ge F(y'\mid y)\, \widetilde{Q}(y) > 0.
  \label{eq:qhat-support}
\end{equation}
Hence $\widehat{Q}$ is strictly positive on the support of $W'$.

Having constructed $\widehat{Q}$, we return to the comparison between $S_\alpha(W')$ and $S_\alpha(W)$. To bridge the gap between them we introduce the auxiliary quantity $T$. Since $\widehat{Q}$ is a feasible candidate in the minimization defining $S_\alpha(W')$, we obtain the feasibility bound:
\begin{equation}
  S_\alpha(W') \le T,
  \qquad
  T \coloneqq \max_{x' \in \mathcal{X}'} \sum_{y'} W'(y'\mid x')^{\alpha}\, \widehat{Q}(y')^{1-\alpha} .
  \label{eq:feasibility-bound}
\end{equation}
Every term is finite by the support property of $\widehat{Q}$, so $T < \infty$. It remains to show $T \le S_\alpha(W)$. Since $T$ is a maximum over $x'$, we fix an arbitrary $x'^{*} \in \mathcal{X}'$ and bound the corresponding sum from above.

To evaluate this sum and connect it back to the base channel, we leverage the variational (Legendre) representation derived from Young's inequality. Letting $\beta \coloneqq \alpha/(\alpha-1)$ be the H\"older conjugate of $\alpha$ (so $\beta > 1$ and $\frac{1}{\alpha} + \frac{1}{\beta} = 1$), Young's inequality \cite{hardy1952inequalities} states that for all $p, q \ge 0$ and $t \ge 0$,
\begin{equation}
  p\,t - \frac{1}{\beta} q\, t^{\beta} \ \le\ \frac{1}{\alpha} p^{\alpha} q^{1-\alpha},
  \label{eq:young-gt1}
\end{equation}
with equality when $p, q > 0$ at $t = (p/q)^{\alpha-1}$. 

Applying \eqref{eq:young-gt1} pointwise and summing implies that for any conditional distribution $W(\cdot\mid x)$ and any center $Q$,
\begin{equation}
  \frac{1}{\alpha} \sum_{y} W(y\mid x)^{\alpha} Q(y)^{1-\alpha} = \sup_{h \ge 0} \left[ \sum_{y} W(y\mid x) h(y) - \frac{1}{\beta} \sum_{y} Q(y) h(y)^{\beta} \right],
  \label{eq:legendre}
\end{equation}
where, whenever $Q(y) > 0$, the supremum over $h(y)$ is attained at $h(y) = (W(y\mid x)/Q(y))^{\alpha-1}$.

To connect $T$ with $S_\alpha(W)$ using the Legendre representation, based on an arbitrary test function $f \ge 0$, we define the local test function $\widehat{f}_{x}(y)$ as follows:
\begin{equation}
  \widehat{f}_{x}(y) \coloneqq \frac{1}{E(x\mid x'^{*})} \sum_{y'} \Pi(x, y' \mid x'^{*}, y) f(y'),
  \label{eq:favg}
\end{equation}
where, by the no-backward signaling condition \eqref{eq:ns1}, $\Pi(x, y' \mid x'^{*}, y) / E(x\mid x'^{*})$ forms a probability distribution in $y'$. Therefore, $\widehat{f}_{x}(y)$ represents the expected value of $f$ under this conditional distribution:
\begin{equation}
  \widehat{f}_{x}(y) = \mathbb{E}_{y'\sim \Pi(x, y'\mid x'^{*}, y)/E(x\mid x'^{*})} [f(y')],
  \label{eq:favg-expectation}
\end{equation}
which is a convex combination of the values of $f$ \footnote{If $E(x\mid x'^{*}) = 0$ for some $x$, then $\Pi(x, y' \mid x'^{*}, y) = 0$ for all $y, y'$, so such $x$ contribute nothing to any of the sums below and may be omitted.}.

Using the Legendre representation for the conditional distribution $W(\cdot\mid x)$ and the center $\widetilde{Q}$, by substituting the test function $\widehat{f}_{x}$ in place of the general test function, we obtain the inequality:
\begin{equation}
\begin{aligned}
  \frac{1}{\alpha} \sum_{y} W(y\mid x)^{\alpha} \widetilde{Q}(y)^{1-\alpha}
  &= \sup_{h \ge 0} \left[ \sum_{y} W(y\mid x) h(y) - \frac{1}{\beta} \sum_{y} \widetilde{Q}(y) h(y)^{\beta} \right] \\
  &\ge \left[ \sum_{y} W(y\mid x) \widehat{f}_{x}(y) - \frac{1}{\beta} \sum_{y} \widetilde{Q}(y) \widehat{f}_{x}(y)^{\beta} \right].
\end{aligned}
\label{eq:legendre-P}
\end{equation}

Multiplying both sides of the inequality \eqref{eq:legendre-P} by $E(x\mid x'^{*})$ and summing over $x$, we obtain:
\begin{equation}
\begin{split}
  \frac{1}{\alpha} \sum_{x} E(x\mid x'^{*}) \sum_{y} W(y\mid x)^{\alpha} \widetilde{Q}(y)^{1-\alpha}
  &\ge \sum_{x} E(x\mid x'^{*}) \sum_{y} W(y\mid x) \widehat{f}_{x}(y) \\
  &\quad - \frac{1}{\beta} \sum_{x} E(x\mid x'^{*}) \sum_{y} \widetilde{Q}(y) \widehat{f}_{x}(y)^{\beta}.
\end{split}
\label{eq:multiplied}
\end{equation}
On the right-hand side of \eqref{eq:multiplied}, there is a linear term and a nonlinear term, and we handle each of them separately.

For the linear term, expanding $\widehat{f}_{x}$ via \eqref{eq:favg} and then applying the simulation identity \eqref{eq:ns-simulation}, we obtain:
\begin{equation}
\begin{split}
  \sum_{x} E(x\mid x'^{*}) \sum_{y} W(y\mid x)\, \widehat{f}_{x}(y)
  &= \sum_{y'} f(y') \sum_{x,y} W(y\mid x)\, \Pi(x, y' \mid x'^{*}, y) \\
  &= \sum_{y'} f(y')\, W'(y'\mid x'^{*}).
\end{split}
\label{eq:linear-term}
\end{equation}

For the nonlinear term, we seek an upper bound. Since $\beta > 1$, the function $t \mapsto t^{\beta}$ is convex, and Jensen's inequality \cite{hardy1952inequalities} applied to the average $\widehat{f}_{x}(y)$ in \eqref{eq:favg-expectation} gives:
\begin{equation}
  \widehat{f}_{x}(y)^{\beta} \le \frac{1}{E(x\mid x'^{*})} \sum_{y'} \Pi(x, y' \mid x'^{*}, y) f(y')^{\beta}.
  \label{eq:jensen-ineq}
\end{equation}
Multiplying by $E(x \mid x'^{*})\widetilde{Q}(y)$, summing over $x$ and $y$, using the no-forward signaling condition \eqref{eq:ns2} to collapse the $x$ sum, and recognizing the remaining sum over $y$ as the pushforward \eqref{eq:pushforward-gt1}, we obtain:
\begin{equation}
  \sum_{x} E(x \mid x'^{*}) \sum_{y} \widetilde{Q}(y)\, \widehat{f}_{x}(y)^{\beta} \le \sum_{y'} f(y')^{\beta} \sum_{y} F(y'\mid y)\, \widetilde{Q}(y) = \sum_{y'} f(y')^{\beta}\, \widehat{Q}(y').
  \label{eq:jensen-collapsed}
\end{equation}
Substituting the identity \eqref{eq:linear-term} for the linear term and the upper bound \eqref{eq:jensen-collapsed} for the nonlinear term into \eqref{eq:multiplied}, where the latter substitution preserves the direction of the inequality since $-1/\beta < 0$, we obtain:
\begin{equation}
  \frac{1}{\alpha} \sum_{x} E(x\mid x'^{*}) \sum_{y} W(y\mid x)^{\alpha} \widetilde{Q}(y)^{1-\alpha} \ge \sum_{y'} W'(y'\mid x'^{*})\, f(y') - \frac{1}{\beta} \sum_{y'} \widehat{Q}(y')\, f(y')^{\beta}.
  \label{eq:collapse}
\end{equation}
In \eqref{eq:collapse}, only the right-hand side depends on $f$, so we may optimize it over $f \ge 0$. To evaluate its supremum, we apply the Legendre representation \eqref{eq:legendre} to the channel $W'(\cdot\mid x'^{*})$ with center $\widehat{Q}$, which reads:
\begin{equation}
  \frac{1}{\alpha} \sum_{y'} W'(y'\mid x'^{*})^{\alpha}\, \widehat{Q}(y')^{1-\alpha} = \sup_{f \ge 0} \left[ \sum_{y'} W'(y'\mid x'^{*})\, f(y') - \frac{1}{\beta} \sum_{y'} \widehat{Q}(y')\, f(y')^{\beta} \right].
  \label{eq:T-legendre}
\end{equation}
Taking the supremum over $f \ge 0$ in \eqref{eq:collapse} and substituting \eqref{eq:T-legendre}, we multiply both sides by $\alpha > 0$ to obtain:
\begin{equation}
  \sum_{x} E(x\mid x'^{*}) \sum_{y} W(y\mid x)^{\alpha}\, \widetilde{Q}(y)^{1-\alpha} \ge \sum_{y'} W'(y'\mid x'^{*})^{\alpha}\, \widehat{Q}(y')^{1-\alpha}.
  \label{eq:supremum-taken}
\end{equation}
Since $E(\cdot \mid x'^{*})$ is a probability distribution over $x$ owing to the no-backward signaling condition \eqref{eq:ns1}, an average over $x$ is bounded from above by the maximum over $x$. Consequently, we have:
\begin{equation}
  \sum_{x} E(x\mid x'^{*}) \sum_{y} W(y\mid x)^{\alpha} \widetilde{Q}(y)^{1-\alpha} \le \max_{x} \sum_{y} W(y\mid x)^{\alpha} \widetilde{Q}(y)^{1-\alpha} = S_\alpha(W),
  \label{eq:upper-bound}
\end{equation}
where the last equality holds by \eqref{eq:augustin-functional-def}, since $\widetilde{Q}$ attains the minimum of $\Phi_W$.

By combining \eqref{eq:supremum-taken} and \eqref{eq:upper-bound}, we obtain:
\begin{equation}
  \sum_{y'} W'(y'\mid x'^{*})^{\alpha}\, \widehat{Q}(y')^{1-\alpha} \le S_\alpha(W).
  \label{eq:final-bound}
\end{equation}
Since only the left-hand side of \eqref{eq:final-bound} depends on $x'^{*}$, which was chosen arbitrarily from the beginning, we can take the maximum over $x'^{*}$ on the left-hand side. Consequently, we have:
\begin{equation}
  T = \max_{x' \in \mathcal{X}'} \sum_{y'} W'(y'\mid x')^{\alpha}\, \widehat{Q}(y')^{1-\alpha} \le S_\alpha(W),
  \label{eq:renyi-capacity-bound}
\end{equation}
where the first equality holds by the definition of $T$ in \eqref{eq:feasibility-bound}.

Therefore, combining \eqref{eq:renyi-capacity-bound} and \eqref{eq:feasibility-bound}, we obtain:
\begin{equation}
  S_\alpha(W') \le T \le S_\alpha(W),
  \label{eq:monotonicity-chain}
\end{equation}
which, by \eqref{eq:reduction-gt1}, completes the proof.
\end{proof}

\subsection{Monotonicity for \texorpdfstring{$\alpha=\infty$}{alpha=infinity}}
\label{sec:alphainf}

\begin{lemma}
\label{lem:alphainf}
If the simulator channel $W'$ is non-signaling simulable from the base channel $W$, then their respective R\'enyi capacities of order $\alpha = \infty$ satisfy the monotonicity relation:
\begin{equation}
  R_\infty(W') \le R_\infty(W).
  \label{eq:lemma-alphainf}
\end{equation}
\end{lemma}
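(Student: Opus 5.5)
Since $D_\infty(P\|Q)=\log\max_y P(y)/Q(y)$, the information radius of order $\infty$ has the closed form
\begin{equation*}
  R_\infty(W)=\log\sum_{y}\max_{x}W(y\mid x),
\end{equation*}
and the claim reduces to the combinatorial inequality $\sum_{y'}\max_{x'}W'(y'\mid x')\le\sum_y\max_x W(y\mid x)$. I would first derive this closed form directly from \eqref{eq:info-radius}. Put $M\coloneqq\sum_y\max_xW(y\mid x)$. Choosing $Q^\star(y)=\max_xW(y\mid x)/M$ gives $W(y\mid x)/Q^\star(y)\le M$ for every $x,y$, so $R_\infty(W)\le\log M$. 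Conversely, for any $Q$ and any $c$ with $W(y\mid x)\le cQ(y)$ for all $x,y$, summing $\max_xW(y\mid x)\le cQ(y)$ over $y$ yields $c\ge M$. Hence $R_\infty(W)=\log M$. As an alternative I could obtain the same formula as the limit $\alpha\to\infty$ of Lemma~\ref{lem:alphagt1}. However, that route needs an interchange of the limit with the min–max, so the direct computation is cleaner.

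The main step bounds $W'(y'\mid x')$ pointwise. By \eqref{eq:ns-simulation}, $W(y\mid x)\le m(y)\coloneqq\max_{x''}W(y\mid x'')$, and $\Pi\ge 0$,
\begin{equation*}
  W'(y'\mid x')=\sum_{x,y}\Pi(x,y'\mid x',y)W(y\mid x)\le\sum_y m(y)\sum_x\Pi(x,y'\mid x',y)=\sum_y m(y)F(y'\mid y),
\end{equation*}
where the last equality uses the no-forward signaling condition \eqref{eq:ns2}. The right-hand side does not depend on $x'$, so it also bounds $\max_{x'}W'(y'\mid x')$. Summing over $y'$ and using $\sum_{y'}F(y'\mid y)=1$ gives $\sum_{y'}\max_{x'}W'(y'\mid x')\le\sum_y m(y)$. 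Taking logarithms proves \eqref{eq:lemma-alphainf}.

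The same argument can be phrased in the language of the earlier subsections: the pushforward of the optimal center $Q^\star$ through $F$, as in \eqref{eq:pushforward-gt1}, is a feasible center for $W'$ achieving ratio at most $M$. I expect no real obstacle here. The only delicate point is justifying the closed form, which requires treating the $\infty$-divergence as the limit of $D_\alpha$ together with its conventions when $Q(y)=0$; the two-sided argument above handles this. Note that the no-backward signaling condition \eqref{eq:ns1} is not even needed in this case.
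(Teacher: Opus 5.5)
Your proof is correct and is essentially the paper's argument. You use the same closed form $R_\infty(W)=\log\sum_y\max_x W(y\mid x)$, justified by the same summation argument for the optimality of the normalized column maxima, and the same key step: bound $W(y\mid x)$ by $\max_x W(y\mid x)$, collapse $\sum_x\Pi(x,y'\mid x',y)$ to $F(y'\mid y)$ via no-forward signaling, then use $\sum_{y'}F(y'\mid y)=1$. The only difference is cosmetic: your bound on $W'(y'\mid x')$ is already uniform in $x'$, so you can skip the paper's linearization $\sum_y\max_x W(y\mid x)=\max_\sigma\sum_{x,y}\sigma(x\mid y)W(y\mid x)$ and its auxiliary distribution $A(x,y)$, and, as in the paper's remark, no-backward signaling is never used.
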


\begin{proof}
At the limiting order $\alpha = \infty$, the information radius admits a closed form, and we establish this first. Writing each summand of \eqref{eq:renyi-divergence} as $P(x)^{\alpha} Q(x)^{1-\alpha} = Q(x) \big(P(x)/Q(x)\big)^{\alpha}$, the sum is a weighted sum of exponentials in $\alpha$ and is therefore dominated, as $\alpha \to \infty$, by the largest likelihood ratio. Consequently,
\begin{equation}
  D_\infty(P \,\|\, Q) \coloneqq \lim_{\alpha \to \infty} D_\alpha(P \,\|\, Q) = \log \max_{x:\,P(x)>0} \frac{P(x)}{Q(x)},
  \label{eq:dinf}
\end{equation}
with the convention $P(x)/Q(x) = +\infty$ whenever $Q(x) = 0 < P(x)$, so that the finiteness of \eqref{eq:dinf} forces $Q$ to dominate $P$.

Substituting \eqref{eq:dinf} into the definition \eqref{eq:info-radius} of the information radius of a channel $W(y\mid x)$ and using the fact that logarithm is increasing, the two inner maxima merge into a joint maximum over $x$ and $y$, and we obtain:
\begin{equation}
  R_\infty(W) = \min_{Q} \max_{x} \log \max_{y} \frac{W(y\mid x)}{Q(y)} = \log \min_{Q} \max_{y} \frac{m(y)}{Q(y)}, \qquad m(y) \coloneqq \max_{x \in \mathcal{X}} W(y\mid x).
  \label{eq:rinf-minimax}
\end{equation}
The remaining minimization runs over the simplex $\mathcal{P}(\mathcal{Y})$. Its optimum is attained when the ratios $m(y)/Q(y)$ are equalized across all $y$ with $m(y) > 0$, since any $Q$ placing less mass on such a $y$ can always be improved by moving mass toward it to reduce the maximum ratio. Precisely, for every $Q$, summing over $y$ with $m(y)>0$,
\begin{equation}
    \max_y m(y)/Q(y) \ge \sum_y m(y)\big/\sum_y Q(y) \ge \sum_y m(y),
\end{equation} with equality for the equalizing distribution.
 Therefore, the equalizing distribution is:
\begin{equation}
  Q^{*}(y) = \frac{m(y)}{\sum_{y} m(y)}.
  \label{eq:qstar}
\end{equation}
Substituting \eqref{eq:qstar} into \eqref{eq:rinf-minimax}, the information radius at order $\alpha = \infty$ is
\begin{equation}
  R_\infty(W) = \log \sum_{y \in \mathcal{Y}} \max_{x \in \mathcal{X}} W(y\mid x) \eqqcolon \log \lVert M_W \rVert_1 ,
  \label{eq:rinf-closed-form}
\end{equation}
where the shorthand notation $\lVert M_W \rVert_1$ is adopted. Since the logarithm is a strictly increasing function, the claim \eqref{eq:lemma-alphainf} is equivalent to the inequality $\lVert M_{W'} \rVert_1 \le \lVert M_W \rVert_1$, which is established as follows.

The pointwise maximum in \eqref{eq:rinf-closed-form} is addressed by first linearizing the expression: for any channel $W(y\mid x)$,
\begin{equation}
  \sum_{y \in \mathcal{Y}} \max_{x \in \mathcal{X}} W(y\mid x) = \max_{\sigma} \sum_{x,y} \sigma(x\mid y)\, W(y\mid x),
  \label{eq:variational}
\end{equation}
where the maximum runs over all conditional distributions $\sigma(x \mid y)$. The right-hand side is linear in the channel, and we substitute \eqref{eq:ns-simulation} into it.
Let $\sigma^{*}$ attain the maximum in \eqref{eq:variational} for the simulator channel $W'$. Substituting the simulation identity \eqref{eq:ns-simulation} and exchanging the order of summation yields:
\begin{align}
  \lVert M_{W'} \rVert_1
    &= \sum_{x',y'} \sigma^{*}(x'\mid y')\, W'(y'\mid x') \label{eq:inf-step-1} \\
    &= \sum_{x',y'} \sigma^{*}(x'\mid y') \sum_{x,y} \Pi(x,y'\mid x',y)\, W(y\mid x) \label{eq:inf-step-2} \\
    &= \sum_{x,y} W(y\mid x)\, A(x,y), \label{eq:inf-step-3}
\end{align}
where we have introduced:
\begin{equation}
  A(x,y) \coloneqq \sum_{x',y'} \sigma^{*}(x'\mid y')\, \Pi(x,y'\mid x',y).
  \label{eq:adef}
\end{equation}
For each fixed $y$, the quantity $A(\cdot,y)$ defines a probability distribution over $x$, as we now verify. Non-negativity is immediate from \eqref{eq:adef}. For the normalization, summing over $x$ and invoking the no-forward signaling condition \eqref{eq:ns2} yields:
\begin{align}
  \sum_{x} A(x,y)
    &= \sum_{x',y'} \sigma^{*}(x'\mid y') \sum_{x} \Pi(x,y'\mid x',y) \label{eq:inf-norm-1} \\
    &= \sum_{x',y'} \sigma^{*}(x'\mid y')\, F(y'\mid y) \label{eq:inf-norm-2} \\
    &= \sum_{y'} F(y'\mid y) \sum_{x'} \sigma^{*}(x'\mid y') \label{eq:inf-norm-3} \\
    &= \sum_{y'} F(y'\mid y) \label{eq:inf-norm-4} \\
    &= 1. \label{eq:inf-norm-5}
\end{align}

Consequently, for each fixed $y$ the inner sum in \eqref{eq:inf-step-3} is an average of the values $W(y\mid x)$ over $x$, and is therefore bounded above by the largest of them:
\begin{equation}
  \sum_{x} A(x,y)\, W(y\mid x) \le \max_{x} W(y\mid x).
  \label{eq:average-le-max}
\end{equation}
Recalling \eqref{eq:inf-step-3}, which amounts to summing
\eqref{eq:average-le-max} over $y$, and substituting the closed form
\eqref{eq:rinf-closed-form}, we obtain:
\begin{equation}
  \lVert M_{W'} \rVert_1 = \sum_{y} \sum_{x} A(x,y)\, W(y\mid x) \le \sum_{y} \max_{x} W(y\mid x) = \lVert M_W \rVert_1 .
  \label{eq:inf-final}
\end{equation}

By \eqref{eq:rinf-closed-form}, the inequality $\lVert M_{W'} \rVert_1 \le
\lVert M_W \rVert_1$ established in \eqref{eq:inf-final} is equivalent to
$R_\infty(W') \le R_\infty(W)$. This is the claim
\eqref{eq:lemma-alphainf}, which completes the proof of monotonicity for
$\alpha = \infty$.
\end{proof}
\medskip
\begin{remark}\label{rem:only-ns2} The proof relies exclusively on the no-forward signaling condition \eqref{eq:ns2} in step \eqref{eq:inf-norm-2}, whereas the no-backward signaling condition \eqref{eq:ns1} is never invoked. Since the linearization \eqref{eq:variational} reduces the problem to the outer-output marginal governed by \eqref{eq:ns2}, monotonicity of the R\'enyi capacity at order $\alpha = \infty$ holds for the strictly larger class of boxes satisfying \eqref{eq:ns2} alone, provided that \eqref{eq:ns-simulation} still defines a valid channel $W'$.
\end{remark}
\medskip
\subsection{The endpoint $\alpha \to 0$ and zero-error capacity}
\label{sec:disc-zero}
 
So far $\alpha \in (0, \infty]$ has been covered. The endpoint
$\alpha = 0$ has its own operational meaning. The R\'enyi divergence of order $0$
is $D_0(P \,\|\, Q) = -\log Q(\operatorname{supp} P)$~\cite{vanerven2014}, so
the information radius \eqref{eq:info-radius} at order $0$ is:
\begin{equation}\label{eq:R0}
  R_0(W) = \min_{Q \in \mathcal{P}(\mathcal{Y})} \max_{x \in \mathcal{X}}
  \big( -\log Q(\mathcal{S}_x) \big)
  = -\log \max_{Q \in \mathcal{P}(\mathcal{Y})} \min_{x \in \mathcal{X}}
  Q(\mathcal{S}_x),
  \qquad
  \mathcal{S}_x := \{ y : W(y \mid x) > 0 \}.
\end{equation}
The optimization is a linear program, and linear programming duality \cite{schrijver1986,scheinerman1997} gives
\begin{equation}\label{eq:R0-packing}
  e^{R_0(W)} = \alpha^*(W) := \max \Big\{ \sum_{x} u(x) \;:\;
  u \ge 0, \ \sum_{x : W(y \mid x) > 0} u(x) \le 1 \ \text{for all } y
  \Big\},
\end{equation}
the fractional packing number of the bipartite graph of $W$, a standard quantity in fractional graph
theory~\cite{scheinerman1997} and the one used
in~\cite{cubitt2011zero}. Cubitt, Leung, Matthews and Winter~\cite{cubitt2011zero} proved that $\log \alpha^*(W)$ is exactly the non-signaling assisted zero-error capacity of $W$. So the zero-error end of the dial is the non-signaling zero-error capacity: non-signaling-assisted codes achieve zero error at every rate below $R_0(W)$. (The same number also appears as Shannon's formula for the zero-error capacity with feedback~\cite{shannon1956}, valid when the unassisted zero-error
capacity is positive.)
 
Lemma~\ref{lem:alphalt1} can be extended to this endpoint directly.

\begin{lemma}
    \label{lem:alphaeq0}
If the simulator channel $W'$ is non-signaling simulable from the base channel $W$, then their respective information radii of order $0$ satisfy the monotonicity relation:
\begin{equation}
  R_0(W') \le R_0(W).
  \label{eq:lemma-alphaeq0}
\end{equation}
\end{lemma}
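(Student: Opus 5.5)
I plan a direct argument modelled on the proof of Lemma~\ref{lem:alphalt1}, working with the support-based form \eqref{eq:R0} rather than passing to the limit $\alpha\to 0$. A limit argument would need continuity of the minimax value in $\alpha$ at the endpoint, which I would rather avoid. By \eqref{eq:R0} it suffices to show
\[
\max_{Q'\in\mathcal{P}(\mathcal{Y}')}\min_{x'} Q'(\mathcal{S}'_{x'}) \ \ge\ \max_{Q\in\mathcal{P}(\mathcal{Y})}\min_{x} Q(\mathcal{S}_x),
\qquad \mathcal{S}'_{x'}:=\{y' : W'(y'\mid x')>0\}.
\]
Let $\widetilde{Q}$ attain the right-hand maximum. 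It exists because $Q\mapsto \min_x Q(\mathcal{S}_x)$ is continuous, indeed piecewise linear, on the simplex. As in \eqref{eq:pushforward-lt1}, I take the candidate $\widehat{Q}(y')=\sum_y F(y'\mid y)\widetilde{Q}(y)$, which is a probability distribution by \eqref{eq:qhat-sum-lt1}.

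Fix $x'$. Expanding $F$ via \eqref{eq:ns2} gives
\[
\widehat{Q}(\mathcal{S}'_{x'})=\sum_{y}\widetilde{Q}(y)\sum_{x}\sum_{y'\in\mathcal{S}'_{x'}}\Pi(x,y'\mid x',y).
\]
The key combinatorial step is a support observation. Take $y\in\mathcal{S}_x$, so $W(y\mid x)>0$. If $\Pi(x,y'\mid x',y)>0$ for such a pair, then the simulation identity \eqref{eq:ns-simulation} gives $W'(y'\mid x')\ge \Pi(x,y'\mid x',y)W(y\mid x)>0$, so $y'\in\mathcal{S}'_{x'}$. Hence, for $y\in\mathcal{S}_x$, the restricted sum $\sum_{y'\in\mathcal{S}'_{x'}}\Pi(x,y'\mid x',y)$ equals the full sum over $y'$. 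By the no-backward signaling condition \eqref{eq:ns1}, that full sum is $E(x\mid x')$.

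I then drop the nonnegative terms with $y\notin\mathcal{S}_x$ and obtain
\[
\widehat{Q}(\mathcal{S}'_{x'})\ \ge\ \sum_x E(x\mid x')\sum_{y\in\mathcal{S}_x}\widetilde{Q}(y)=\sum_x E(x\mid x')\,\widetilde{Q}(\mathcal{S}_x)\ \ge\ \min_x \widetilde{Q}(\mathcal{S}_x).
\]
The last inequality holds because $E(\cdot\mid x')$ is a probability distribution, exactly as in \eqref{eq:lower-bound-lt1}. Taking the minimum over $x'$ and using feasibility of $\widehat{Q}$ then gives the claim after applying $-\log$.

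The main point to get right is the support step. It needs both non-signaling conditions: \eqref{eq:ns2}, so that $\widehat{Q}$ does not depend on $x'$, and \eqref{eq:ns1}, to identify the inner sum with $E(x\mid x')$. The step does not depend on $W(y\mid x)$ itself, only on its support, which is why the argument linearizes at $\alpha=0$ without any Legendre transform. I would also check the convention $-\log 0=+\infty$, but this causes no trouble: $\min_x\widetilde{Q}(\mathcal{S}_x)>0$ always holds, for instance with $\widetilde{Q}$ uniform, since each $\mathcal{S}_x$ is nonempty.
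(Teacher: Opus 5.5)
Your proof is correct, and it takes a different route from the paper. The paper argues by a limit. Since $D_\alpha$ is nondecreasing in the order, $R_0(W')\le R_\alpha(W')$ for every $\alpha>0$. Evaluating the radius of $W$ at a fixed minimizer for $R_0(W)$, where $D_\alpha\to D_0$ as $\alpha\downarrow 0$, gives $\limsup_{\alpha\downarrow 0}R_\alpha(W)\le R_0(W)$. Lemma~\ref{lem:alphalt1} then closes the chain $R_0(W')\le R_\alpha(W')\le R_\alpha(W)\to R_0(W)$.

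The endpoint continuity you wanted to avoid is therefore cheap: order monotonicity gives one side and a fixed center gives the other. Once Lemma~\ref{lem:alphalt1} is available, the paper's proof is a few lines and presents $\alpha=0$ as the end of the dial. Your argument instead works directly with the support form \eqref{eq:R0} and is self-contained. It needs no Legendre or Jensen step, no appeal to Lemma~\ref{lem:alphalt1}, and no facts about how $R_\alpha$ varies with $\alpha$.

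Your route also shows that the same pushforward center $\widehat Q$ used in Lemmas~\ref{lem:alphalt1} and~\ref{lem:alphagt1} works at order $0$. It makes clear that only the supports of $W$ and $W'$ matter. It pinpoints where each condition enters: \eqref{eq:ns2} makes $\widehat Q$ independent of $x'$, and \eqref{eq:ns1} identifies the restricted sum with $E(x\mid x')$. Since \eqref{eq:R0} is the covering side of the linear program whose dual is the fractional packing number \eqref{eq:R0-packing}, your argument is a direct proof that $\alpha^*$ cannot increase under non-signaling simulation; the paper recovers that fact only indirectly through the limit.

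Two minor points. You do not need the maximizer $\widetilde Q$ to exist, since running the argument for an arbitrary $Q$ and taking the supremum suffices. The remark about $-\log 0$ is also unnecessary, because $-\log$ is monotone on $[0,1]$ under that convention.
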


\begin{proof}
    For finite alphabets, $\alpha \mapsto R_\alpha(W)$ is nondecreasing and $R_\alpha(W) \to R_0(W)$ as $\alpha \downarrow 0$: the upper bound follows by evaluating the radius at a minimizer $Q$ for $R_0(W)$ and using the continuity of $\alpha \mapsto D_\alpha$ on $[0,1]$. Letting $\alpha \downarrow 0$ in Lemma~\ref{lem:alphalt1} gives $R_0(W') \le R_0(W)$, which recovers the monotonicity of the non-signaling zero-error capacity under non-signaling simulation~\cite{cubitt2011zero}.
\end{proof}
 
Comparing with the unassisted problem shows that monotonicity is a
property of the R\'enyi family and not of capacity-like quantities in
general. The unassisted zero-error capacity $C_0(W)$, the largest rate at which communication with zero error is possible, is the logarithm of the Shannon capacity of the confusability graph of $W$, and it can
strictly increase when the sender and receiver share
entanglement~\cite{cubitt2011zero,leung2012}. An entanglement-assisted encoder
and decoder form a non-signaling box, so the unassisted zero-error
capacity is not monotone under non-signaling simulation, while
$R_0$ is.
\subsection{The main Theorem}
\label{sec7main}
\begin{theorem}
\label{thm:main}
If a simulator channel $W'$ is non-signaling simulable from a base channel $W$, then their respective R\'enyi capacities satisfy:
\begin{equation}
  R_\alpha(W') \le R_\alpha(W)
  \qquad \text{for every } \alpha \in [0,\infty].
  \label{eq:main-theorem}
\end{equation}
\end{theorem}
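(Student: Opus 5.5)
The theorem is a consolidation of the five regime lemmas already established, so the plan is a case split on $\alpha\in[0,\infty]$ that dispatches each case to the corresponding earlier result. For $\alpha=0$ we invoke Lemma~\ref{lem:alphaeq0}. For $0<\alpha<1$ we invoke Lemma~\ref{lem:alphalt1}. For $\alpha=1$ we use Lemma~\ref{lem:alphaeq1} together with the identification $R_1(W)=C(W)$ of the radius with the Shannon capacity \cite{kemperman1974,csiszar1995}. For $1<\alpha<\infty$ we invoke Lemma~\ref{lem:alphagt1}, and for $\alpha=\infty$ Lemma~\ref{lem:alphainf}. These five sets cover $[0,\infty]$, so the inequality \eqref{eq:main-theorem} holds for every order once each lemma applies with the same box $\Pi$ satisfying Definition~\ref{def:ns-simulation}. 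This is the case, since every lemma assumes only that $W'$ is non-signaling simulable from $W$.

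The one point that needs care is that each lemma controls the same quantity $R_\alpha$ defined by \eqref{eq:info-radius}. At the endpoints this means the limiting divergences must match what each lemma uses. At $\alpha=1$, the radius built from the relative entropy must coincide with $C(W)$ for both channels; this is the Kemperman/Csisz\'ar minimax identity recalled in Section~\ref{subsec:renyi}, applied to $W$ and to $W'$ separately. At $\alpha=\infty$, the closed form \eqref{eq:rinf-closed-form} follows from \eqref{eq:dinf} inside the proof of Lemma~\ref{lem:alphainf}. At $\alpha=0$, Lemma~\ref{lem:alphaeq0} rests on the limit $R_\alpha\to R_0$ as $\alpha\downarrow 0$.

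I expect that last limit to be the main obstacle, if any step is one, since it requires an interchange of $\min_Q\max_x$ with $\alpha\downarrow 0$. The argument runs as follows. Monotonicity of $\alpha\mapsto D_\alpha$ gives $R_0\le R_\alpha$. Fixing a minimizer $Q_0$ of $R_0$ gives $R_\alpha\le \max_x D_\alpha(W(\cdot|x)\|Q_0)$, and this tends to $R_0$ because $x$ ranges over a finite set and $D_\alpha$ is continuous at $0$. These two bounds together yield the limit.

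The limit is then passed through the inequality $R_\alpha(W')\le R_\alpha(W)$ of Lemma~\ref{lem:alphalt1}, which completes the proof.
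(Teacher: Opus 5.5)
Your proposal is correct and matches the paper's proof. The paper also proves the theorem by the same five-way case split, dispatching to Lemmas~\ref{lem:alphaeq0}, \ref{lem:alphalt1}, \ref{lem:alphaeq1} (via $R_1=C$), \ref{lem:alphagt1} and \ref{lem:alphainf}, and your check of the $\alpha\downarrow 0$ limit (squeezing $R_\alpha$ between $R_0$ and $\max_x D_\alpha(W(\cdot\mid x)\,\|\,Q_0)$) is the same argument the paper gives in the proof of Lemma~\ref{lem:alphaeq0}.
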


\begin{proof}
Fix $\alpha \in [0, \infty]$. The claim follows from the five regimes established in Subsections~\ref{sec:alphaeq1}--\ref{sec:disc-zero}: Lemma~\ref{lem:alphaeq1} for $\alpha = 1$ (noting $R_1(W) = C(W)$ from \eqref{eq:info-radius}), Lemma~\ref{lem:alphalt1} for $0 < \alpha < 1$, Lemma~\ref{lem:alphagt1} for $1 < \alpha < \infty$, Lemma~\ref{lem:alphainf} for $\alpha = \infty$ and Lemma~\ref{lem:alphaeq0} for $\alpha=0$. These cases collectively cover $[0, \infty]$, completing the proof.
\end{proof}

\section{Applications}
\label{sec:applications}

Theorem~\ref{thm:main} states that a non-signaling box cannot increase
any R\'enyi capacity of a classical channel. In this section, we explain
what this means for the rate--reliability trade-offs of
Section~\ref{sec:intro} and derive one operational consequence that is
not stated in the theorem itself. As in the previous section, $W'$ is a channel
that is non-signaling simulable from $W$ by super-channel $\Pi$ in the sense of
Definition~\ref{def:ns-simulation}.
 
\subsection{Error exponent curves of the simulated channel}
\label{sec:disc-curves}
 
Gallager's~\cite{gallager1965simple} function of a channel $W$ and an
input distribution $P$ is
\begin{equation}\label{eq:E0}
  E_0(\rho,P,W)
  := -\log \sum_y
  \left(\sum_x P(x)W(y\mid x)^{1/(1+\rho)}\right)^{1+\rho},
  \qquad \rho>-1.
\end{equation}
Its relation to the R\'enyi capacity~\cite{csiszar1995,nakiboglu2019renyi}
depends on the sign of $\rho$:
\begin{equation}\label{eq:csiszar}
  \rho R_{1/(1+\rho)}(W)
  =
  \begin{cases}
    \displaystyle\max_{P\in\mathcal{P}(\mathcal{X})}
      E_0(\rho,P,W), & \rho>0,\\[6pt]
    \displaystyle\min_{P\in\mathcal{P}(\mathcal{X})}
      E_0(\rho,P,W), & -1<\rho<0.
  \end{cases}
\end{equation}
Indeed, $E_0(\rho,P,W)=\rho I_{1/(1+\rho)}(P,W)$, where
$I_\alpha(P,W)$ is the Sibson R\'enyi information and
$R_\alpha(W)=\max_P I_\alpha(P,W)$. Multiplication by a negative
$\rho$ reverses the optimization. At $\rho=0$, $E_0(0,P,W)=0$
for every $P$. Writing $\alpha = 1/(1+\rho)$, so that $\rho = (1-\alpha)/\alpha$, the
three classical exponent curves become functions of the R\'enyi
capacities alone:
\begin{align}
  E_{\mathrm{r}}(R, W)
    &:= \sup_{\alpha \in [1/2, 1)} \frac{1-\alpha}{\alpha}
        \big( R_\alpha(W) - R \big), \label{eq:Er} \\
  E_{\mathrm{sp}}(R, W)
    &:= \sup_{\alpha \in (0, 1)} \frac{1-\alpha}{\alpha}
        \big( R_\alpha(W) - R \big), \label{eq:Esp} \\
  E_{\mathrm{sc}}(R, W)
    &:= \sup_{\alpha \in (1, \infty)} \frac{\alpha-1}{\alpha}
        \big( R - R_\alpha(W) \big). \label{eq:Esc}
\end{align}
These curves bound the reliability function~\cite{shannon1967lower}, an operational quantity that we now define precisely. An $(n, M)$ code for $W$ is an encoder
$f : \{1, \dots, M\} \to \mathcal{X}^n$ together with a decoder
$g : \mathcal{Y}^n \to \{1, \dots, M\}$, and its average error
probability is
\begin{equation}\label{eq:avg-error}
  \varepsilon(f, g) := \frac{1}{M} \sum_{m=1}^{M}
  \sum_{y^n :\, g(y^n) \neq m} W^{\otimes n}\big( y^n \mid f(m) \big).
\end{equation}
We write $\varepsilon^*(n, M, W)$ for the minimum of
\eqref{eq:avg-error} over all $(n, M)$ codes. The \emph{reliability
function} of $W$ at rate $R > 0$ is the optimal exponential decay
rate of the error probability of codes of rate $R$,
\begin{equation}\label{eq:reliability}
  E(R, W) := \limsup_{n \to \infty}\; -\frac{1}{n} \log
  \varepsilon^*\big( n, \lceil e^{nR} \rceil, W \big),
\end{equation}
with the conventions $-\log 0 = +\infty$, so that $E(R, W) = \infty$
on rates where zero-error communication is possible, and $\limsup$
because the limit is not known to exist at all rates; wherever $E$ is
known the limit exists, so the choice matters only in the open
regimes. Using maximal instead of average error probability changes
$\varepsilon^*$ by at most a factor two in $M$ and leaves the
exponent unchanged~\cite{Gal68}. In this notation, the classical
bounds read
\begin{equation}\label{eq:exp-meaning}
  E_{\mathrm{r}}(R, W) \;\le\; E(R, W) \;\le\; E_{\mathrm{sp}}(R, W),
  \qquad 0 < R < C(W),
\end{equation}
by~\cite{gallager1965simple} and~\cite{shannon1967lower}, and the three terms are equal for $R$ above
the critical rate, so the reliability function is known exactly. For $R > C(W)$ the success probability decays for all $n$ with exponent $E_{\mathrm{sc}}(R,W)$, with the exponent being achieved in the asymptotic limit~\cite{arimoto1973converse,DueckKorner1979}.
 
\begin{corollary}\label{cor:curves}
If $W'$ is non-signaling simulable from $W$, then for every rate
$R \ge 0$,
\begin{equation}\label{eq:cor-curves}
  E_{\mathrm{r}}(R, W') \le E_{\mathrm{r}}(R, W), \qquad
  E_{\mathrm{sp}}(R, W') \le E_{\mathrm{sp}}(R, W), \qquad
  E_{\mathrm{sc}}(R, W') \ge E_{\mathrm{sc}}(R, W).
\end{equation}
\end{corollary}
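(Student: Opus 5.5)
The plan is to read the three inequalities off Theorem~\ref{thm:main} by a termwise comparison of the objectives in \eqref{eq:Er}--\eqref{eq:Esc}, followed by taking suprema. No further properties of the box $\Pi$ are needed: everything about non-signaling simulation enters only through the monotonicity $R_\alpha(W') \le R_\alpha(W)$ for every $\alpha \in [0,\infty]$.

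For the two exponents below capacity, fix $R \ge 0$ and any $\alpha \in (0,1)$. The prefactor $(1-\alpha)/\alpha$ is strictly positive, so Theorem~\ref{thm:main} gives
\begin{equation*}
\frac{1-\alpha}{\alpha}\big(R_\alpha(W') - R\big) \le \frac{1-\alpha}{\alpha}\big(R_\alpha(W) - R\big).
\end{equation*}
Each objective for $W'$ is thus dominated pointwise in $\alpha$ by the corresponding objective for $W$. Taking the supremum over $\alpha \in [1/2,1)$ yields $E_{\mathrm{r}}(R,W') \le E_{\mathrm{r}}(R,W)$, and taking it over $\alpha \in (0,1)$ yields $E_{\mathrm{sp}}(R,W') \le E_{\mathrm{sp}}(R,W)$. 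A supremum of pointwise smaller functions over the same index set is smaller, and this remains valid in the extended reals when some values are $+\infty$.

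For the strong-converse exponent, fix $\alpha \in (1,\infty)$. The prefactor $(\alpha-1)/\alpha$ is positive, and $R_\alpha$ now enters with a minus sign, so Theorem~\ref{thm:main} gives
\begin{equation*}
\frac{\alpha-1}{\alpha}\big(R - R_\alpha(W')\big) \ge \frac{\alpha-1}{\alpha}\big(R - R_\alpha(W)\big).
\end{equation*}
Taking the supremum over $\alpha \in (1,\infty)$ gives $E_{\mathrm{sc}}(R,W') \ge E_{\mathrm{sc}}(R,W)$.

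I do not expect a genuine obstacle; the only care needed is with well-definedness. For finite alphabets every $R_\alpha$ with $\alpha \in (0,\infty)$ is finite, so the differences above are finite real numbers and no $\infty - \infty$ issue arises. The suprema may still equal $+\infty$, for instance $E_{\mathrm{sp}}$ at small rates, but the comparison of suprema holds in $(-\infty,+\infty]$. I would also remark that the inequalities hold for every $R \ge 0$ simultaneously, since the same $\alpha$-wise comparison is used at each rate. This is what justifies the statement in the introduction that a box cannot trade a better exponent at one rate for a worse one at another.
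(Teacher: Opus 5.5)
Your proposal is correct and is the paper's argument: it compares the objectives in \eqref{eq:Er}--\eqref{eq:Esc} term by term, using Theorem~\ref{thm:main} and the positivity of the prefactors, and then takes suprema over the same index sets. Your remarks on well-definedness — every $R_\alpha$ is finite for finite alphabets, and the comparison of suprema is taken in $(-\infty,+\infty]$ — are correct details that the paper's short proof leaves implicit.
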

 
\begin{proof}
In \eqref{eq:Er} and \eqref{eq:Esp} every term inside the supremum is
nondecreasing in $R_\alpha$; in \eqref{eq:Esc} every term is
nonincreasing in $R_\alpha$. Theorem~\ref{thm:main} gives
$R_\alpha(W') \le R_\alpha(W)$ for each $\alpha$, and taking suprema
term by term gives \eqref{eq:cor-curves}.
\end{proof}
 
Corollary~\ref{cor:curves} shows that the whole rate-reliability curve of the simulated
channel is better that of the base channel, on both sides of
capacity. Note that since Corollary~\ref{cor:curves} is a pointwise statement in $R$, sacrificing capacity in the super-channel cannot lift any of the exponent curves at any rate. Above the critical rate, where the reliability function is
known exactly, the reliability function itself is monotone. Because
Theorem~\ref{thm:main} is a one-shot statement, the same holds at
every finite block length for the bounds built from $E_0$: Gallager's
bound $\varepsilon \le \exp(-n[\max_P E_0(\rho,P,W) - \rho R])$ and
Arimoto's bound, evaluated on $W'$, are never better than on $W$.
 
The corollary also shows the limit of what the theorem says. Below the
critical rate the reliability function is not known, and in this regime only
the two bounds in \eqref{eq:exp-meaning} are shown to be monotone, not the reliability function itself. We further discuss this regime in Section~\ref{sec:disc-open}.
 
\subsection{A strong converse for non-signaling assisted codes}
\label{sec:disc-sc}
 
An assisted code is itself a simulation. A non-signaling assisted
$(n, M, \bar\varepsilon)$ code for $W$ is a non-signaling simulation,
in the sense of Definition~\ref{def:ns-simulation}, of a channel $W'$ with
$\mathcal{X}' = \mathcal{Y}' = \{1, \dots, M\}$ from the base channel
$W^{\otimes n}$, where the average success probability is
\begin{equation}\label{eq:code-success}
  1 - \bar\varepsilon(W') = \frac{1}{M} \sum_{m=1}^{M} W'(m \mid m).
\end{equation}
The no-backward-signaling condition \eqref{eq:ns1} states that
the encoder does not see the channel output, and the
no-forward-signaling condition \eqref{eq:ns2} states that the
decoder does not see the message. Unassisted codes, codes with shared
randomness, and entanglement-assisted codes are special cases.
 
We use two facts about the R\'enyi capacity. First, it is additive on
product channels~\cite{csiszar1995,nakiboglu2019renyi}:
\begin{equation}\label{eq:additive}
  R_\alpha(W^{\otimes n}) = n \, R_\alpha(W).
\end{equation}
Second, the radius \eqref{eq:info-radius} equals the maximum of
Sibson's mutual information over input
distributions~\cite{csiszar1995}:
\begin{equation}\label{eq:sibson}
  R_\alpha(W) = \max_{P \in \mathcal{P}(\mathcal{X})} I_\alpha(X, Y)_{WP},
  \qquad
  I_\alpha(X, Y)_{WP} := \frac{\alpha}{\alpha-1} \log \sum_{y}
  \Big( \sum_{x} P(x) \, W(y \mid x)^{\alpha} \Big)^{1/\alpha}.
\end{equation} With these, we can show the following:
 
\begin{corollary}\label{prop:sc}
The average success probability of every non-signaling assisted $(n, M, \bar\varepsilon)$ code $W'$ for
$W$ satisfies, for every $\alpha > 1$,
\begin{equation}\label{eq:prop-sc}
  1 - \bar\varepsilon(W') \le
  \exp\Big( -\frac{\alpha-1}{\alpha}
  \big( \log M - n R_\alpha(W) \big) \Big).
\end{equation}
Consequently, with $R = \frac{1}{n} \log M$,
\begin{equation}\label{eq:prop-sc-exp}
  1 - \bar\varepsilon(W') \le \exp\big( -n \, E_{\mathrm{sc}}(R, W) \big).
\end{equation}
\end{corollary}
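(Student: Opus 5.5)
Theorem~\ref{thm:main} applied to the simulation of $W'$ from $W^{\otimes n}$, together with the additivity \eqref{eq:additive}, gives
\begin{equation*}
R_\alpha(W') \le R_\alpha(W^{\otimes n}) = n R_\alpha(W).
\end{equation*}
What remains is a lower bound on $R_\alpha(W')$ in terms of $M$ and the success probability $1-\bar\varepsilon(W')$. For this we use the Sibson representation \eqref{eq:sibson} with the uniform input distribution $P(m)=1/M$ on $\{1,\dots,M\}$:
\begin{equation*}
R_\alpha(W') \ge I_\alpha(X,Y)_{W'P} = \frac{\alpha}{\alpha-1}\log \sum_{m'} \Big(\frac1M\sum_m W'(m'\mid m)^\alpha\Big)^{1/\alpha}.
\end{equation*}

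To bound the sum from below, we keep only the diagonal term $m=m'$ inside each inner sum. This is legitimate because all terms are nonnegative and $t\mapsto t^{1/\alpha}$ is increasing. It gives
\begin{equation*}
\sum_{m'} \Big(\tfrac1M\sum_m W'(m'\mid m)^\alpha\Big)^{1/\alpha} \ge M^{-1/\alpha}\sum_{m} W'(m\mid m) = M^{1-1/\alpha}\,(1-\bar\varepsilon(W')),
\end{equation*}
using \eqref{eq:code-success}. Since $\alpha>1$, the prefactor $\alpha/(\alpha-1)$ is positive, so
\begin{equation*}
R_\alpha(W') \ge \log M + \frac{\alpha}{\alpha-1}\log\big(1-\bar\varepsilon(W')\big).
\end{equation*}
Combining this with $R_\alpha(W')\le nR_\alpha(W)$ and rearranging gives
\begin{equation*}
\log(1-\bar\varepsilon) \le -\frac{\alpha-1}{\alpha}\big(\log M - nR_\alpha(W)\big),
\end{equation*}
which is \eqref{eq:prop-sc}. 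If the success probability is zero, the bound holds trivially.

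For \eqref{eq:prop-sc-exp}, we substitute $\log M = nR$. Then \eqref{eq:prop-sc} reads $1-\bar\varepsilon \le \exp\big(-n\tfrac{\alpha-1}{\alpha}(R-R_\alpha(W))\big)$ for every $\alpha>1$. Taking the infimum of the right-hand side over $\alpha\in(1,\infty)$ is the same as taking the supremum of the exponent, which by \eqref{eq:Esc} is $E_{\mathrm{sc}}(R,W)$.

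The step that needs care is applying Theorem~\ref{thm:main} to the base channel $W^{\otimes n}$. We must check that an assisted code fits Definition~\ref{def:ns-simulation} with $W^{\otimes n}$ as the base channel, which is exactly how the code was defined above. We also rely on the additivity \eqref{eq:additive}, which is quoted from the literature. Everything else is elementary bookkeeping; the only point to watch is the sign of $\alpha/(\alpha-1)$ when dropping the off-diagonal terms, and it is positive for $\alpha>1$.
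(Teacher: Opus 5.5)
Your proof is correct and follows the paper's argument essentially step for step. You bound $R_\alpha(W')$ from below through Sibson's representation at the uniform input, dropping the off-diagonal terms to get $\log M + \frac{\alpha}{\alpha-1}\log(1-\bar\varepsilon(W'))$. You then combine this with $R_\alpha(W') \le nR_\alpha(W)$ from monotonicity (Lemma~\ref{lem:alphagt1}, i.e.\ Theorem~\ref{thm:main} at $\alpha>1$) plus additivity, and take the supremum over $\alpha>1$.
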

 
\begin{proof}
Let $U_M$ be the uniform distribution on $\{1,\dots,M\}$ and fix
$\alpha > 1$, so that $\alpha/(\alpha-1) > 0$. By \eqref{eq:sibson},
\begin{align}
  R_\alpha(W')
  &\ge I_\alpha(U_M, W')
   = \frac{\alpha}{\alpha-1} \log \sum_{y'=1}^{M}
     \Big( \frac{1}{M} \sum_{x'=1}^{M} W'(y' \mid x')^{\alpha}
     \Big)^{1/\alpha} \label{eq:sc1} \\
  &\ge \frac{\alpha}{\alpha-1} \log \sum_{m=1}^{M}
     \Big( \frac{1}{M} \, W'(m \mid m)^{\alpha} \Big)^{1/\alpha}
     \label{eq:sc2} \\
  &= \frac{\alpha}{\alpha-1} \log \Big( M^{-1/\alpha}
     \sum_{m=1}^{M} W'(m \mid m) \Big) \label{eq:sc3} \\
  &= \frac{\alpha}{\alpha-1} \log \big( M^{1-1/\alpha}
     (1 - \bar\varepsilon(W')) \big)
   = \log M + \frac{\alpha}{\alpha-1} \log(1 - \bar\varepsilon(W')).
     \label{eq:sc4}
\end{align}
In \eqref{eq:sc2} we kept only the term $x' = y'$ in the inner sum,
which is allowed because all terms are nonnegative, the logarithm is
increasing, and the prefactor is positive. In \eqref{eq:sc4} we used
\eqref{eq:code-success}. Lemma~\ref{lem:alphagt1} applied to the
simulation of $W'$ from $W^{\otimes n}$, together with
\eqref{eq:additive}, gives $R_\alpha(W') \le n R_\alpha(W)$. Combining
this with \eqref{eq:sc4},
\begin{equation}\label{eq:sc5}
  \log(1 - \bar\varepsilon(W')) \le \frac{\alpha-1}{\alpha}
  \big( n R_\alpha(W) - \log M \big),
\end{equation}
which is \eqref{eq:prop-sc}. Taking the supremum over $\alpha > 1$
with $\log M = nR$ gives \eqref{eq:prop-sc-exp} by \eqref{eq:Esc}.
\end{proof}
 
Corollary~\ref{prop:sc} is the precise meaning of ``non-signaling
assistance cannot weaken the strong converse''. The bounds \eqref{eq:prop-sc} and
\eqref{eq:prop-sc-exp} hold at every block length $n$; only their
tightness is an asymptotic statement. They carry information for rate $R$
above capacity: for $R \le C(W)$ every $R_\alpha(W)$ with $\alpha > 1$
is at least $C(W)$, so $E_{\mathrm{sc}}(R, W) = 0$ and
\eqref{eq:prop-sc-exp} is trivial, while for every $R > C(W)$ it
forces the success probability of every assisted code to decay
exponentially. Arimoto's exponent $E_{\mathrm{sc}}(R, W)$ for $R>C(W)$ is
asymptotically attained by unassisted codes~\cite{DueckKorner1979}. Corollary~\ref{prop:sc} shows this exponent does not improve under non-signaling  assistance, so the strong converse exponent of a
discrete memoryless channel is the same with and without
non-signaling assistance. Moreover, as particular cases of non-signaling assistance, this is also true for the more practical cases of shared classical
randomness and entanglement. 

Finally, a closed-form statement for the case where $\alpha=\infty$ can be derived. Note the interplay (both in the previous Corollary~\ref{prop:sc} and Corollary~\ref{cor:alphainf-code}) between the average success probability of the non-signaling simulated channel $W'$ on the left hand side and the right hand side statements about the original channel $W$.

\begin{corollary}\label{cor:alphainf-code}
The average success probability of every non-signaling assisted $(n, M, \bar\varepsilon)$ code $W'$ for $W$
satisfies
\begin{equation}\label{eq:sc-inf}
  1 - \bar\varepsilon(W') \le\frac{1}{M}
  \Big( \sum_{y} \max_{x} W(y \mid x) \Big)^{n}.
\end{equation}
\end{corollary}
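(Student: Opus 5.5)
The plan is to run the argument of Corollary~\ref{prop:sc} at order $\alpha=\infty$, using the closed form \eqref{eq:rinf-closed-form} in place of Sibson's formula. There are three steps: bound $\lVert M_{W'}\rVert_1$ from below by the success probability, bound it from above via Lemma~\ref{lem:alphainf}, and evaluate the base side by tensorization.

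\emph{Lower bound on the simulated side.} Here $W'$ is a channel on $\{1,\dots,M\}$. Keeping only the diagonal term in each maximum gives
\begin{equation*}
  \lVert M_{W'}\rVert_1=\sum_{y'}\max_{x'}W'(y'\mid x')\ \ge\ \sum_{m=1}^{M}W'(m\mid m)=M\big(1-\bar\varepsilon(W')\big),
\end{equation*}
where the equality is \eqref{eq:code-success}.

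\emph{Upper bound via monotonicity.} By definition, the assisted code is a non-signaling simulation of $W'$ from $W^{\otimes n}$. The proof of Lemma~\ref{lem:alphainf}, specifically \eqref{eq:inf-final}, therefore gives $\lVert M_{W'}\rVert_1\le\lVert M_{W^{\otimes n}}\rVert_1$.

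\emph{Tensorization.} It remains to compute $\lVert M_{W^{\otimes n}}\rVert_1$. For product inputs and outputs,
\begin{equation*}
  \max_{x^n}\prod_{i=1}^n W(y_i\mid x_i)=\prod_{i=1}^n\max_{x_i}W(y_i\mid x_i),
\end{equation*}
because each factor is nonnegative and depends only on its own coordinate $x_i$. Summing over $y^n$ then factorizes into $\big(\sum_y\max_x W(y\mid x)\big)^n$. Chaining the three steps and dividing by $M$ yields \eqref{eq:sc-inf}.

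No step is a real obstacle. The only points needing care are to cite \eqref{eq:inf-final} directly, or to exponentiate Lemma~\ref{lem:alphainf}, and to carry out the factorization of the max of a product explicitly rather than relying on the general additivity \eqref{eq:additive}.
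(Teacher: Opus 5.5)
Your proposal is correct and follows the paper's proof step for step. The three steps match: the diagonal lower bound on the sum of column maxima of $W'$, monotonicity from Lemma~\ref{lem:alphainf} applied with base channel $W^{\otimes n}$, and the coordinatewise factorization of the maximum over product inputs. Citing \eqref{eq:inf-final} directly for $\lVert M_{W'}\rVert_1\le\lVert M_{W^{\otimes n}}\rVert_1$, instead of passing through $R_\infty$ and exponentiating as the paper does, is only a cosmetic difference.
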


\begin{proof}
Let $W'$ be the channel on $\{1, \dots, M\}$ that the code simulates
from $W^{\otimes n}$. By the closed form \eqref{eq:rinf-closed-form},
$R_\infty(W')$ is the logarithm of the sum of the column maxima of
$W'$. Each column maximum dominates the diagonal entry of its column,
so by \eqref{eq:code-success},
\begin{equation}\label{eq:inf-lower}
  R_\infty(W')
  = \log \sum_{y'=1}^{M} \max_{x'} W'(y' \mid x')
  \ge \log \sum_{m=1}^{M} W'(m \mid m)
  = \log \big( M (1 - \bar\varepsilon(W')) \big).
\end{equation}
In the other direction, Lemma~\ref{lem:alphainf} applied to the
simulation of $W'$ from the base channel $W^{\otimes n}$ gives
$R_\infty(W') \le R_\infty(W^{\otimes n})$, and $R_\infty$ is exactly
additive on products, because the maximum over product inputs is
attained coordinate by coordinate:
\begin{equation}\label{eq:inf-tensor}
  \sum_{y^n} \max_{x^n} W^{\otimes n}(y^n \mid x^n)
  = \sum_{y^n} \prod_{i=1}^{n} \max_{x_i} W(y_i \mid x_i)
  = \Big( \sum_{y} \max_{x} W(y \mid x) \Big)^{n}.
\end{equation}
Combining \eqref{eq:inf-lower} with
$R_\infty(W') \le n R_\infty(W)$ and exponentiating gives
\eqref{eq:sc-inf}. 
\end{proof}

\section{Discussion}
\label{sec:discussion}
 
\subsection{What non-signaling assistance can still do}
\label{sec:disc-coding}
 
The phrase ``cannot enhance reliability'' must be read with care.
Theorem~\ref{thm:main} is about the simulation of channels. Assisted
coding is a different task, and there the assistance does help below
capacity. The results below, from the literature, combined with Corollary~\ref{cor:curves}, state the following. $E_{\mathrm{sp}}(R,W)$ is a ceiling: non-signaling assisted codes are able to reach the ceiling at every rate $R<C(W)$, which unassisted codes cannot do, while Theorem~\ref{thm:main} states that no assistance can raise the ceiling. 

Matthews~\cite{matthews2012linear} showed that the optimal error
probability of a non-signaling assisted code equals the
meta-converse bound of~\cite{polyanskiy2010channel}. Polyanskiy~\cite[Th.~24]{polyanskiy13}, as well as Oufkir, Tomamichel and Berta~\cite{oufkir2024} for classical-quantum channels, showed that the exponent of
the meta-converse is the sphere-packing exponent at every rate, so
the non-signaling assisted reliability function
$E^{\mathrm{NS}}(R, W)$ of a classical channel is
\begin{equation}\label{eq:ns-reliability}
  E^{\mathrm{NS}}(R, W) = E_{\mathrm{sp}}(R, W)
  \qquad \text{for all } R < C(W),
\end{equation}
with no critical rate, and with zero error at rates below $R_0(W)$
(Subsection~\ref{sec:disc-zero}). Unassisted codes in general do not
reach \eqref{eq:ns-reliability} at low rates. For the binary symmetric
channel $W_p$ with crossover probability $p \in (0, \tfrac12)$, for
example,
\begin{equation}\label{eq:bsc}
  \lim_{R \downarrow 0} E^{\mathrm{NS}}(R, W_p)
  = -\log\big( 2\sqrt{p(1-p)} \big),
  \qquad
  \lim_{R \downarrow 0} E(R, W_p)   
  = -\tfrac12 \log\big( 2\sqrt{p(1-p)} \big),
\end{equation}
where $E$ is the unassisted reliability function and the second limit
is the zero-rate bound of~\cite{shannon1967lower}. Non-signaling assistance
therefore doubles the zero-rate exponent of the binary symmetric channel, while it leaves the capacity unchanged, as does entanglement~\cite{bennett2002}.
 
\subsection{Open questions}\label{sec:disc-open}
 
\emph{Monotonicity of the reliability function at low rates.} Below the critical rate the reliability function is
not known even for unassisted codes: the random-coding lower bound
\eqref{eq:Er} and the sphere-packing upper bound \eqref{eq:Esp}
separate~\cite{gallager1965simple,shannon1967lower}.

The rates of interest sit in the
chain
\begin{equation}\label{eq:rate-chain}
  C_0(W) \le R_0(W) \le R_{\mathrm{crit}}(W) \le C(W),
\end{equation}
where the critical rate $R_{\mathrm{crit}}(W)$ is the smallest rate at
which \eqref{eq:Er} and \eqref{eq:Esp} agree; the middle inequality
holds because $E_{\mathrm{sp}}(\cdot\,, W)$ is infinite below $R_0(W)$
while \eqref{eq:Er} is finite, so the two bounds cannot have met yet. We can evaluate the monotonicity of the reliability function on each of these bands.
Below $C_0(W)$, $E(R,W)=\infty$, so the monotonicity is trivial. Above $R_{\mathrm{crit}}(W)$, monotonicity of the reliability function
follows from Corollary~\ref{cor:curves}: for
$R \in (R_{\mathrm{crit}}(W), C(W))$,
\begin{equation}\label{eq:high-rate-mono}
  E(R, W') \le E_{\mathrm{sp}}(R, W') \le E_{\mathrm{sp}}(R, W)
  = E(R, W).
\end{equation} 

For $R\in(C_0(W),R_0(W))$ monotonicity of the reliability function is known to be false. Fix $R\in(C_0(W),R_0(W))$. Non-signaling boxes allow zero-error communication over $W^{\otimes n}$ at rate $R$ for large $n$~\cite{cubitt2011zero}, since the non-signaling-assisted zero-error capacity equals $R_0(W)$. Such a code is itself a non-signaling simulation of a noiseless channel $W'$ with $\lceil e^{nR}\rceil$ symbols from $W^{\otimes n}$. For every $R'\in(C_0(W),R)$, $E(nR',W')=\infty$, whereas $E(nR',W^{\otimes n})=nE(R',W)$ is finite. This does not contradict
Corollary~\ref{cor:curves}, because $E_{\mathrm{sp}}(R, W)$ is itself
infinite there, so the ordering of the upper bounds is trivial. Note that this band can be nonempty, as shown in \cite{cubitt2011zero}. 

Finally, in the remaining
band $(R_0(W), R_{\mathrm{crit}}(W))$ both bounds are finite and
ordered ,but due to the gap between them, the monotonicity of the true reliability function is open. To the best of our knowledge, this regime is where the monotonicity of the reliability function remains open.

\emph{Classical-quantum channels.} The natural next step is a base
channel $x \mapsto \rho_x$ with quantum outputs and a box whose output
side is a quantum instrument, with R\'enyi capacities defined through
the Petz or the sandwiched R\'enyi divergence. The non-signaling
assisted error exponent of such channels was recently found~\cite{oufkir2024}, so a
monotonicity theorem would be a natural follow-up. The Legendre
transform used here has operator versions through the H\"older and
Young inequalities for positive operators, but the non-commutativity
of the output side introduces new difficulties that we leave to future
work.

\section*{Acknowledgments}
T.S. acknowledges support from the Centre for Quantum Technologies at the National University of Singapore through the Alice Prize 2025. M.H.R. is supported by the National Research Foundation, Prime Minister's Office, Singapore, under its Campus for Research Excellence and Technological Enterprise (CREATE) Program. M.T. acknowledges support from the National Research Foundation Investigatorship Award (NRF-NRFI10-2024-0006) and the National Research Foundation, Singapore, through the National Quantum Office, hosted by A*STAR, under its Centre for Quantum Technologies Funding Initiative (S24Q2d0009).

\section*{LLM use disclaimer}
Claude assistance has been used to derive the proofs of Lemmas 2 to 5 and to assist in writing the manuscript. The authors claim full responsibility over the statements in the manuscript, including any mistakes present.

\printbibliography
\end{document}